\newtheorem{assumption}{Assumption}[section]
\newtheorem{definition}{Definition}[section]
\newtheorem{theorem}{Theorem}[section]
\newtheorem{proposition}{Proposition}[section]
\newcommand{\kibitz}[2]{\ifnum\Comments=1{\color{#1}{#2}}\fi}
\newcommand{\m}[1]{\mathcal{#1}}
\renewcommand{\b}[1]{\boldsymbol{#1}}
\newcommand{\doublespace}{\hspace{2px}}
\newcommand{\newexample}[2]{
\vspace{5px} \noindent \textbf{#1.}
#2
\newline \noindent}
\newcommand{\Reals}{\mathbb{R}}
\newcommand{\units}{\mathcal{U}}
\newcommand{\maxUnit}{m}
\newcommand{\agents}{\mathcal{I}}
\newcommand{\outcomes}{\Reals}
\newcommand{\UperA}{k}
\newcommand{\agent}{i}
\newcommand{\maxAgent}{n}
\newcommand{\unit}{u}
\newcommand{\Yobs}[1]{Y_{#1}^{\mathrm{obs}}}
\newcommand{\Yobsbar}[1]{\overline{\Yobs{#1}}}
\newcommand{\block}{b}
\newcommand{\maxBlock}{B}
\newcommand{\actions}{\mathcal{A}}
\newcommand{\design}{\m{D}}
\newcommand{\winner}{\hat{\tau}(\Yobs{..})}
\newcommand{\Ex}[1]{\mathbb{E}(#1)}
\newcommand{\Var}[1]{\mathbb{V}\mathrm{ar}(#1)}
\newcommand{\Sa}{\Sigma(\A)}
\newcommand{\N}[1]{\mathcal{N}(0, #1)}
\newcommand{\chiA}{\b{\chi}(\A)}
\newcommand{\phiB}{\b{\phi}(\Yobs{..})}
\newcommand{\designTuple}{(\psi, \phi)}
\newcommand{\tod}{\xrightarrow{D}}
\newcommand{\Z}{\mathbf{Z}}
\newcommand{\A}{\mathbf{A}}
\newcommand\defeq{\stackrel{\mathclap{\normalfont\mbox{\scriptsize def}}}{=}}
\newcommand{\ExCond}[2]{\mathbb{E}\left(#1|#2\right)}
\newcommand{\lam}[1]{\lambda_{#1}}
\newcommand{\lamc}[1]{\lambda^{'}_{#1}}
\newenvironment{customthm}[1]
  {\innercustomthm}
  {\endinnercustomthm}
\newcommand{\drawMone}[2]{
% 1=width, 2=height
\begin{figure}[h]
\centering
\begin{tikzpicture}[scale=1.0]
% seed 1 = (w, 0)
\draw (#1, 0)  node[draw,shape=circle, color=gray, text=black, fill=white, inner sep=0.7mm] (s1) {\footnotesize seed 1};
% test set 1 = (0, h)
\draw (0, #2)  node[draw,shape=circle, color=gray, text=black, fill=white, inner sep=3.5mm] (b11) {\footnotesize  Test set 1};
% seed2 = (5w, 0)
\draw (5 * #1, 0)  node[draw,shape=circle, color=gray, text=black, fill=white, inner sep=0.7mm] (s2) {\footnotesize seed 2};
% test2 = (6w, h)
\draw  (6 * #1, #2)  node[draw,shape=circle, color=gray, text=black, fill=white, inner sep=3.5mm] (b22) {\footnotesize  Test set 2};
% divider (3w, 0) - > 3w, h
\draw [dashed] (3 * #1, #2+0.8) -- (3 * #1, -.5);
 \draw [->] (s1) -- (b11);
 \draw [->, dashed] (s1) -- (b22);
 \draw [->] (s2) -- (b22);
 \draw [->, dashed] (s2) -- (b11);
\node[text width=4cm] at (1.5 * #1, 0.4 * #2) {$\lam{1}$};
\node[text width=4cm] at (5.5 * #1, 0.4 * #2) {\textcolor{white}{............}$\lam{2}$};
\node[text width=4cm] at (3.5 * #1, 0.8 * #2) {$\gamma \lamc{2}$};
\node[text width=4cm] at (5.2 * #1,  0.8 * #2) {$\gamma \lamc{1}$};
\end{tikzpicture}
\caption{Test set $i$ has units assigned to agent $i$, i.e., 
$\{u \in \units : Z_u=i\}$.
Seed set $i$ corresponds to the treatment version $A_i$.
The seed sets influence the purchase rate of units in the test sets,
for example, through word-of-mouth effects between units. In particular, $A_i = (\lam{i},\lamc{i})$, where $\lam{i}$ 
is the induced rate from seed set $i$ to test set $i$, 
and $\gamma \lamc{i}$ is the induced rate from 
seed set $i$ to the other
test set, where $0\le\gamma\le1$ is a parameter 
that models interference.
Outcomes, i.e., product purchases, are measured on units in the test sets; the score of agent $i$ will be calculated based on 
observed purchases in test set $i$. 
Arrows indicate induced purchase rates from the seed sets; dashed arrows indicate that the rate is discounted by $\gamma$. 
The presence of interference, where an agent can affect 
the purchase rate on a test set of another agent, changes how 
agent select their seed sets, i.e., their treatment versions.
}
\label{fig:m1}
\end{figure}}
\newcommand{\drawMtwo}[2]{
\begin{figure}[h]
\centering
\begin{tikzpicture}[scale=1.0]
\draw (-1, 0)  node[draw,shape=circle, color=white, text=black, fill=white, inner sep=0.7mm] (s122) {\textcolor{white}{aasasasas}};

% seed 1
\draw (#1, 0)  node[draw,shape=circle, color=gray, text=black, fill=white, inner sep=0.7mm] (s1) {\footnotesize seed 1};
% test set 12
\draw (0, #2)  node[draw,shape=circle, color=gray, text=black, fill=white, inner sep=1mm] (b11) {\footnotesize  Test set $G_{11}$ };
% Test set 12
\draw (2 * #1, #2)  node[draw,shape=circle, color=gray, text=black, fill=white, inner sep=1mm] (b12) 
{\footnotesize Test set $G_{12}$};
% Group 1
\filldraw[fill opacity=0] 
(-0.8* #1, #2-1.2) rectangle (2.8 * #1, 1.5 * #2);
% test G1
\node[text width=4cm] at (#1, 1.6 * #2) {
Group $G_1$};
% seed 2
\draw (5 * #1, 0)  node[draw,shape=circle, color=gray, text=black, fill=white, inner sep=0.7mm] (s2) {\footnotesize seed 2};
% Group 21
\draw (4 * #1, #2)  node[draw,shape=circle, color=gray, text=black, fill=white, inner sep=1mm] (b21) 
{\footnotesize Test set $G_{21}$};
% group G22
\draw (6 * #1, #2)  node[draw,shape=circle, color=gray, text=black, fill=white, inner sep=1mm] (b22) {\footnotesize  Test set $G_{22}$};
% Group 2
\filldraw[fill opacity=0] 
(3.2 * #1, #2-1.2) rectangle (6.8 *#1, 1.5 * #2);
% text 2
\node[text width=4cm] at (6.5 * #1, 1.6 * #2) {
Group $G_2$};
\draw [dashed] (3 * #1, 1.8 * #2) -- (3 * #1, -.5);
 \draw [->] (s1) -- (b11);
 \draw [->, dashed] (s1) -- (b12);
 \draw [->] (s1) -- (b21);
 \draw [->, dashed] (s1) -- (b22);
 \draw [->] (s2) -- (b22);
 \draw [->, dashed] (s2) -- (b21);
 \draw [->] (s2) -- (b12);
 \draw [->, dashed] (s2) -- (b11);
\end{tikzpicture}
\caption{
Test sets $G_{1j}$ and $G_{2j}$
 have the units assigned to agent $j$, i.e., 
$\{u \in \units : Z_u=i\}$; there are two test sets per agent.
Agent $i$ selects an influential seed set $i$, that corresponds to the treatment version $A_i$. The seed sets influence the purchase rate of units in the test sets. In particular, $A_i = (\lam{i},\lamc{i})$, where $\lam{i}$ 
is the induced rate from seed set $i$ to a test set 
with units assigned to $i$, and $\gamma \lamc{i}$ is the induced rate from seed set $i$ to a test set with units 
assigned to the other agent. Outcomes are measured on units in the test sets;
the score of agent $i$ will be calculated based on 
observed purchases of units assigned to agent $i$; 
for example, agent 1 will be scored based on outcomes 
of units in $G_{11}$ and $G_{21}$. Arrows indicate induced purchase rates from the seed sets; dashed arrows indicate that the rate is discounted by $\gamma$. 
Agent scores are calculated based on outcomes in their respective test sets. The presence of interference, where an agent can affect 
the purchase rate on a test set of another agent, changes how 
agent select their seed sets, i.e., their treatment versions.
}
\label{fig:m2}
\end{figure}}
\newcommand{\ones}{\mathbf{1}}
\begin{document}

% Page heads
\markboth{P. Toulis et al.}{Incentive-Compatible Experimental Design}

% Title portion
\title{Incentive-Compatible Experimental Design}

\author[$\alpha$]{Panos Toulis}
\author[$\beta$]{David C. Parkes}
\affil[$\alpha$]{\small Department of Statistics, Harvard University}
\affil[$\beta$]{\small School of Engineering and Applied Science, Harvard University}
\author[$\beta$]{Elery Pfeffer}
\author[$\gamma$]{James Zou}
\affil[$\gamma$]{Microsoft Research}

% NOTE! Affiliations placed here should be for the institution where the
%       BULK of the research was done. If the author has gone to a new
%       institution, before publication, the (above) affiliation should NOT be changed.
%       The authors 'current' address may be given in the "Author's addresses:" block (below).
%       So for example, Mr. Abdelzaher, the bulk of the research was done at UIUC, and he is
%       currently affiliated with NASA.
\maketitle

\begin{abstract}
  We consider the design of experiments to evaluate treatments that
  are administered by self-interested agents, each seeking to achieve
  the highest evaluation and win the experiment. For example, in an
  advertising experiment, a company wishes to evaluate two 
marketing agents in
  terms of their efficacy in viral marketing, and assign a contract to
  the winner agent.  Contrary to traditional experimental design, this
  problem has two new implications.  First, the experiment induces a
  game among agents, where each agent can select from multiple
  versions of the treatment it administers. Second, the action of one
  agent -- selection of {\em treatment version} -- may affect the
  actions of another agent, with the resulting
  \emph{strategic interference} complicating the evaluation of agents.
  An {\em incentive-compatible experiment design} is one with an
  equilibrium where each agent selects its {\em natural action}, which
  is the action that would maximize the performance of the agent if there was no competition (e.g., expected number of conversions if agent was assigned the contract).

% main results 
Under a general formulation of experimental design,
we identify sufficient conditions that guarantee incentive-compatible experiments.
These conditions rely on the existence of statistics that 
can estimate how agents would perform without competition,
and their use in constructing score functions to evaluate the agents. In the setting with no strategic interference, 
we also study the \emph{power} of the design, i.e., the probability that the best agent wins, and show how to improve 
the power of incentive-compatible designs.
From the technical side, our theory uses  a range of statistical methods 
such as hypothesis testing, variance-stabilizing transformations and the Delta method, all of which rely on asymptotics.
\end{abstract}

%\keywords{Experimental design, incentive compatibility, strategic interference, hypothesis testing, variance stabilization, viral marketing.}

\section{Introduction}
\label{section:introduction}

Experiments are the gold-standard for evaluating the effects of
different treatments. The design of experiments is crucial in order to avoid
systematic biases and to minimize random errors in the statistical
evaluation of treatment effects~\cite{cox2000theory}.  There 
are three fundamental concepts in any experiment design. 
The \emph{treatment} is a well-defined prescription or set of rules, e.g., 
a pharmaceutical drug, a marketing campaign, or a new material.
The goal of the experiment is to evaluate the effects of different 
 treatments. 
% Unit
The \emph{experimental unit} is the indivisible entity that will
receive a treatment within the experiment, e.g., a patient, a
potential customer, or a factory process. Typically, every unit
receives only one treatment, but there are important exceptions as
well.  The treatment is assigned according to a \emph{treatment
  assignment rule} specified by the design and necessarily involves
randomization in order to avoid systematic biases.  When a unit
receives the treatment it exhibits a measurable \emph{outcome}, e.g., a health assessment, a product purchase or not, or a material failure rate.

Statistical analysis of unit outcomes is necessary for the evaluation
of treatments because it accounts for the errors that are inherent to
randomization of treatment and the measurement process.
% Blocking
A key idea in experimental design is \emph{blocking}. Background information 
on units is almost always available, e.g., age, gender, socioeconomic status, 
health status, and so on. If an experimenter believes that 
units' outcomes vary systematically with respect to such \emph{covariate} information, then it is necessary to block units with respect to the 
available covariates.
% Why block
Blocking helps to avoid systematic bias and variability that is not of
scientific interest. The unofficial mantra in experimental design is
``block what you can and randomize what you cannot"
Box et. al. \cite{box1978statistics}).

% Example
To illustrate, consider the example of a new flu shot.  A
pharmaceutical company, the \emph{experimenter}, wants to compare
between the new flu shot and a baseline that is currently in the
market. The treatments are the two flu shots.  The experimenter
has a set of volunteer patients who form the set of experimental
units. When a unit receives a treatment the outcome is whether the
unit got flu or not for the three months following the treatment.  As a
treatment assignment rule, the experimenter could simply give the new
flu shot to half of the patients at random, and give the baseline to
the other half.  However, the outcomes could be confounded with
factors such as age (older people are more vulnerable to flu),
geography (urban areas are more crowded and possibly more contagious),
occupation, and so on.  In a blocking design, the experimenter could
block the population based on age and occupation, and perform the
randomization within blocks.

There are two crucial assumptions in experimental design
and the related topic of {\em causal inference}, collectively known as
the {\em stable unit treatment value assumption}
(SUTVA)~\cite{rubin1980comment}.  First, there are \emph{no hidden
  versions} of a treatment. In the previous example, this means that
there are no strong or weak versions of the new flu shot.  Otherwise,
the outcomes would be confounded with the hidden version of the
treatment.  This is an important problem, especially in social
science studies.  For example, in an educational study a new treatment
could be a new type of curriculum, however a possible hidden version
of the treatment is the delivery method by each teacher.  A second
crucial assumption is that of \emph{no interference} among
experimental units. Interference is present when the treatment
assignment on one unit affects the outcome of another unit. In the flu
shot example, a unit that is not vaccinated is still protected when
the friends of the unit are vaccinated. Neither of these assumptions hold in our
setting.
%In the causal inference
%literature, the two aforementioned assumptions are 

We introduce the idea of {\em incentive-compatible experimental
  design} in the context of viral marketing.\footnote{An early
  extended abstract of this paper was presented in the Conference on
  Digital Experimentation at MIT Toulis et. al.\cite{toulisCODE14}.} Imagine a company
that designs a test to determine which of two vendors has the best
algorithm for running an advertising campaign. The firm uses
randomization to prevent systematic bias, and defines a criterion for
success; e.g., the number of conversions over a two week period. The
winning vendor is promised a one-year contract with the firm running
the test. One challenge in this setting is that the vendors might
deviate from how they would normally run a campaign, trying to win the
test.  For example, a lower quality vendor may try to follow a more
aggressive strategy, hoping to get lucky.  This is a problem for the
firm designing the test, who wants to get an unbiased estimate of the
usual performance of the vendor.  Another challenge comes from
interference between the participants. In viral marketing, for
example, one vendor may try to free-ride on word-of-mouth effects that
come from another vendor.

\subsection{Results}

A first contribution of the present paper is to formalize this problem of incentive-compatible experimental design. 
The difference with traditional experimental design is that, 
in our framework, strategic agents administer the treatments 
to be evaluated, and each agent can select from multiple treatment
versions. In this way, the experiment induces a non-cooperative game. 
The action available to an agent in the resulting {\em treatment selection game} 
is the version of the treatment that 
the agent will administer to its assigned units.
The experimenter has a {\em performance metric} 
 to evaluate each treatment version. This is the quantity of interest
to the experimenter. 
Each agent has a {\em natural action}, which is the action that maximizes its performance,
and is assumed to be the way the agent would act if not competing in the game. The {\em quality} of an agent is the maximum value 
of the performance metric, achieved when the agent plays the natural action without competition from other agents.
The goal of the experimenter is to design an experiment to estimate 
the agent of highest quality.
 An {\em incentive-compatible experiment design} is one with 
an equilibrium
in which each agent's best response is to select the treatment version
corresponding to its natural action. We will focus on
dominant-strategy equilibrium in this paper.

We show that incentive-compatible designs are possible when an
\emph{identifying statistic} exists that can estimate the quality
difference between  agents (Theorem \ref{theorem:general}).
Critically, the variance of such a statistic has to be less sensitive to
agent actions than its expected value, otherwise an agent can take
advantage of the variance of the statistic.
Under a no interference assumption, a class of incentive-compatible designs 
can be constructed through a \emph{variance-stabilizing} transformation (Theorem \ref{theorem:no_interference}), which makes the 
variance of the identifying statistic insensitive to agent actions; 
a worse agent cannot hope to increase its chances 
by being more aggressive. 
This leads to results that may sound counter-intuitive.  For example,
in a viral marketing application where performance is the expected
number of conversions, and where higher expected conversions also
correspond to increasingly higher risks, it is not
incentive-compatible to select as the winner the agent with the
highest average performance; rather, it is incentive-compatible to
select as the winner the agent with the lowest reciprocal of average
performance (see Example~2(d)).

Identifying statistics and incentive-compatible designs are generally
harder to obtain under strategic interference. However, under specific
modeling assumptions about the interference, better designs can yield
more information about the agent performances, and thus produce
identifying statistics.  We illustrate this idea in a viral marketing
example, which we reuse throughout this paper.

%% Basic notation and definition of the framework
\section{Preliminaries}

In this section we introduce notation for the operational and
statistical components of incentive-compatible experimental design.
The operational components include the {\em treatment assignment}, the
{\em treatment selection game} and the experiment {\em outcomes}.  The
statistical components include the {\em estimand}-- the quantity of
interest to the experimenter --and the {\em estimators}, i.e., the
data statistics used to estimate the estimand.

%% SECTION:  treatment assignment
\subsection{Treatment assignment}
\label{section:notation:assignment}

% basic sets
Let $\units = \{1, 2, \ldots, \maxUnit \}$ denote the set of {\em
  experimental units}, indexed by $u$, and $\agents = \{1, 2, \ldots,
\maxAgent \}$ denote the set of {\em agents}, indexed by $i$.  Each
agent, for example, a marketing firm or a drug company, represents a
treatment to be evaluated.  An experimenter needs to design the
experiment that will evaluate the agents.  Relative to traditional
experimental design, the new aspect is that each agent is associated
with a set of {\em treatment versions} and each agent has a strategic
choice about which version to administer in the experiment. We make
this precise in Section~\ref{sec:game}.
 
For each unit $u \in \units$ there is covariate information that is
common knowledge to agents and the experimenter.  We assume the
experimenter uses covariates to split units into blocks,
% in a way that
%is agreeable by all agents, 
such that units within one block are
similar in terms of covariates, e.g., similar age, gender, income,
etc. Without loss of generality, we will assume there is just a single
block. In Appendix \ref{section:multiple_blocks} of this paper, we discuss how the theory
can be extended to multiple blocks.

A {\em treatment assignment rule} $\psi$ assigns each unit to a single
agent. Let $\Z = (Z_u)$ denote the $\maxUnit \times 1$ assignment
vector, such that $Z_u=i$ indicates that unit $u$ is assigned to agent
$i$.  The assignment rule $\psi$ is a probability distribution over
all possible assignments $\Z$. Without loss of generality, we assume
that the number of units $\maxUnit$ is a multiple of the number of
agents $\maxAgent$. We will also assume complete randomization, such
that $Z_u=i$, for exactly $\UperA \defeq \maxUnit / \maxAgent$ units,
for each agent $i$.

%%   SECTION:   treatment game
\subsection{Treatment selection game}
\label{sec:game}
% Actions

The set of actions $\actions_i\subseteq \actions$ denotes the feasible
action space for agent $i$, where $\actions$ is the set of all
possible actions.
Subsequent to treatment assignment, every agent $i$ 
simultaneously selects an action
$A_i \in \actions_i$, which corresponds to a version of the treatment
administered by agent $i$. The same version is applied
to all units assigned to agent $i$.\footnote{In Appendix 
\ref{section:multiple_blocks}, we introduce multiple blocks and allow an
  agent to pick a different action for each block. All units within a block receive the
  same treatment version, but versions might differ across blocks.}
Let $\A = (A_1, \ldots A_\maxAgent)$ denote the joint action profile, and
$\A_{-i} = (A_1, \ldots, A_{i-1}, A_{i+1}, \ldots, A_\maxAgent)$ denote the
action profile without $i$'s action. 

We refer to this stage of the process as the \emph{treatment selection
  game} in order to emphasize that agents (i.e., the treatments) can
be strategic in selecting the treatment version they administer to
units.  This differentiates our setting from traditional experimental
design, because it allows multiple versions of the same treatment to
be available, hidden to the experimenter, and subject to selection
by strategic agents.
The traditional setting of experimental design is recovered if all
action spaces of all agents are singletons, i.e., there is only one
treatment version for each agent.\footnote{Dealing with multiple hidden treatments 
remains 
  an open problem in 
traditional experimental design and causal inference,
  although not in a game theoretic setting as ours, and it is
  typically assumed away, for example, through SUTVA
  ~\cite{rubin1980comment}.}

%%  SECTION : outcomes
\subsection{Outcomes}
\label{section:notation:outcomes}
% Outcomes

Subsequent to the treatment selection game, an {\em outcome} is
measured on each experimental unit $u$.  Generally, the
\emph{potential outcome} of unit $u$, denoted by $Y_u(\Z, \A)$, is the
outcome that will be observed under assignment $\Z$ and agent actions
$\A$. We assume that outcomes are numerical values; e.g., expenditure in dollars, number of product purchases, etc.

However, only one potential outcome can be observed at any given
experiment, depending on the realized assignment $\Z$ and actions
$\A$, while the rest will be missing.  To emphasize the difference
between potential outcomes and \emph{observed outcomes}, we use
additional notation.  Let $\Yobs{ui}$ denote the observed outcome on
unit $u$ that was assigned to agent $i$. The notation $\Yobs{ui}$
implies that $u$ was assigned to $i$ (i.e., $Z_u=i$), and it is
undefined if $Z_u \ne i$, i.e., $u$ was not assigned to $i$. Following
a ``dot-notation," $\Yobs{.i}$ denotes the $\UperA \times 1$ vector of
observed outcomes of units assigned to agent $i$, and $\Yobs{..}$
denotes the $\maxUnit \times 1$ vector of observed outcomes of all
units.

Note the dependence of potential outcomes on the complete 
assignment vector $\Z$; this allows the outcome of unit $u$ 
to depend on assignment $Z_{u'}$ of some other unit $u'$, even when agent actions $\A$ are held fixed. 
This situation is reasonable, for example,
when units form social networks and influence each other, and is
generally known as \emph{social network interference}
Toulis and Kao \cite{toulis2013estimation}.  
In our setting, interference between units affects the actions agents take (treatment versions), which then affect the interference on units, and so on. We collectively refer to this situation as {\em strategic interference}.\footnote{
There exists work in experimental design 
with between-unit interference David and Kempton \cite{david1996designs}, although not under a strategic interference setting as ours. In this paper, 
we will not be concerned with such forms of interference, 
but it will be the focus of future work.
There is also related work in estimation of treatment effects 
in the context of strategic agents.
For example, Athey et. al. \cite{athey2008comparing} and Toulis and Parkes \cite{toulis2015statistical} evaluate mechanisms 
in terms of their revenue, under the causal framework of potential outcomes. In both papers, the treatments are two different mechanism formats, and the units are the agents 
competing in the mechanism.
The present work differs because, under our framework, the treatments are in fact strategic agents 
that are evaluated through an experiment, whereas the 
units passively exhibit treatment outcomes. See, also, 
the discussion by Dash \cite{dash2001caveats} on the challenges of causal inference in dynamical systems
within a different causal framework, namely causal graphs 
\cite{pearl2000causality}.}

%% *-notation
We now illustrate the notation with an example application in viral
marketing, which we will reuse throughout this paper.

\newexample{Example 1}{ Assume four units $\units=\{1, 2, 3,4\}$ in a
  single block, say, undergraduate students, and two marketing agents
  $\agents = \{1, 2\}$. Further assume that 1 and 2 are close friends
and 3 and 4 are close friends.
The experimenter wants to understand which agent is
  better at advertising to  students. Assume a
  treatment assignment $\Z = (1, 2, 1, 2)^\intercal$, i.e., units
  $1,3$ are assigned to agent 1, and units $2,4$ to agent $2$. Each
  agent has two actions (treatment versions): advertise through phone
  or through social media. The action sets are thus $\actions_1 =
  \actions_2 = \{\text{phone, social}\}$, and a possible action
  profile is $\A = (\text{phone, social})^\intercal$ with $A_{1} =
  \text{phone}$ (agent 1 uses phone to reach units 1 and 3) and
  $A_{2}=\text{social}$ (agent 2 uses social media to reach units 2
  and 4.)

  The potential outcome $Y_u(\Z, \A)$ could denote the number of
  product purchases (integer outcome) made by unit $u$, or the net
  profit from advertising to unit $u$ (continuous outcome).
% We will apply
%  our framework on both kinds of outcomes.  
Dependence on the assignment and treatment
versions of both agents is reasonable because there could be
word-of-mouth effects between students.

Consider observed data $\Yobs{..}=(0, 1, 4, 1)^\intercal$; for
example, $\Yobs{31}=4$, which indicates that unit 3 was assigned to
agent 1 and purchased four product items; $\Yobs{32}$ is undefined
because the outcome of unit $3$ when assigned to agent 2 is not
observed. To illustrate the dot-notation, $\Yobs{.1} = (0,
4)^\intercal$ indicates the outcomes of units assigned to agent 1, and
$\Yobs{.2}=(1, 1)^\intercal$ indicates the outcomes for agent 2.}
%%  DEFINITION of experiment design.
\medskip

In Example~1, the experimenter might be tempted to declare agent 1 as
the winner, because it achieves $\Yobsbar{.1}=2.0$ purchases/unit, as
opposed to $\Yobsbar{.2}=1.0$ purchases/unit for agent 2. However,
these sample averages are subject to random variability from the
randomization in the experiment, and may result from actions that are not the natural actions of the agents.
%
%also
%subject to strategic variability, as agents might select different
%actions under a different randomization.
% 
Therefore, it is unclear whether the sample averages actually estimate
how agents would do if they were selecting treatments without
competition.

\newcommand{\scores}{\{\phi_b, b \in \m{B}\}}
\subsection{Estimand and estimators}
\label{sec:estimands}

A principled approach is to define the quantity of interest to the
experimenter, the {\em estimand}, and then devise appropriate
estimators for that quantity.  The estimand is the agent with best
possible performance, and thus we need a concrete notion of
performance.  For this, we want to estimate how good an agent's action
would be if it was played without competition and thus without
strategic interference.  This is important because, ultimately, the
experimenter wants to assign a contract (e.g., an advertising
campaign) to the winner agent, after which the winner will act
by itself.

Let's define the \emph{performance of agent $i$ with respect to its
  action $\alpha_i$}, denoted by $\chi(\alpha_i)$, as
\begin{align}
\label{eq:performance}
\chi(\alpha_i) = \ExCond{Y_u(\Z, \A)}{\A=\alpha_i \ones, Z_u=i};
\end{align}
notation $\A = \alpha_i \ones$ denotes the hypothetical 
situation where all agents other than agent $i$ are replaced by ``replicates" of $i$, and each replicate plays action $\alpha_i$.
The dependence of $\chi(\alpha_i)$ on agent index $i$ will be implicit
in the notation. Given assignment vector
$\Z$ and actions $\A$, we  assume that the distribution of
potential outcomes is known to all agents. 

The expectation in Eq.
\eqref{eq:performance} is taken with respect to this distribution,
and defines the quantity of interest to the experimenter 
 because it captures
how agent $i$ would do, on average, if the agent was acting alone 
without competition.\footnote{In causal
  inference, Eq. \eqref{eq:performance} is a \emph{superpopulation
  estimand}, where the experimental units are assumed to be a random sample
  from a superpopulation of units, which is the target of statistical
  inference. The expectation in Eq. \eqref{eq:performance} is thus
  over all units in the superpopulation and all treatment assignments,
  for fixed agent actions. Other estimands in that superpopulation are
  possible; for example, the experimenter might be interested in the
  median outcomes, $\mathrm{med}(Y_u(\Z, \A))$, or the Sharpe ratio,
  $\Ex{Y_u(\Z, \A)} / \mathrm{SD}(Y_u(\Z, \A))$, all conditional on
  fixed actions as in Eq. \eqref{eq:performance}.  In this paper, we
  work under the estimand of Eq. \eqref{eq:performance}, mainly for
  simplicity, however our theory applies to all aforementioned
  estimands as well.}  
We also refer to $\chi$ as the \emph{performance function}, and define
$\chiA = (\chi(A_1), \chi(A_2), \ldots, \chi(A_\maxAgent))^\intercal$.
For brevity, all following definitions for an agent, e.g.,
natural action, quality, etc., will be implicitly assumed to be stated
with respect to a particular performance function $\chi$.

The {\em natural action} of agent $i$ is
the action that maximizes the quantity of interest to the experimenter
in a system where agent $i$ acts alone without competition.
In particular, the \emph{natural  action} of agent $i$, denoted by $A_i^\star$, is defined as the action that maximizes its performance, i.e.,

\begin{align}
\label{eq:natural_action}
A_i^\star \defeq  \arg \max_{\alpha_i \in \actions_i} \left \{
\chi(\alpha_i)  \right \}.
\end{align}

The natural action profile is denoted by $\A^\star = (A_1^\star,
A_2^\star, \ldots, A_{\maxAgent}^\star)$.   
% quality
The \emph{quality} of agent $i$, denoted by $\chi_i^\star \in
\Reals$, is the maximum performance that the agent can
achieve, i.e., $\chi_i^\star = \chi(A_i^\star)$. 
The {\em estimand}, denoted by $\tau$, is the agent of
highest quality, i.e.,
\begin{align}
\label{eq:estimand}
\tau =  \arg \max_{i \in \agents} \{\chi_i^\star\}.
\end{align}

To estimate the agent of highest quality the experimenter needs to use
the observed outcomes $\Yobs{..}$.  We will assume that the
experimenter uses a \emph{score function} $\phi : \outcomes^\maxUnit
\to \Reals^\maxAgent$, mapping all outcomes to a $\maxAgent \times 1$
vector of scores for each agent, denoted by $\phi_i$ for agent $i$.
For convenience, we will write $\phiB = (\phi_1(\Yobs{..}),
\phi_2(\Yobs{..}), \ldots, \phi_\maxAgent(\Yobs{..}))^\intercal$.

In the experiment, agents will be evaluated according to their scores,
and the winner is the agent with the highest score. Several options
for the score functions are possible.  For example, $\phi_i(\Yobs{..})
= \Yobsbar{.i}$, the sample mean of outcomes of units assigned to
agent $i$, is one choice for the score function; other choices are
possible, e.g., the sample Sharpe ratio, the sample median, etc.

% challenge
The key challenge in incentive-compatible experimental design is to
align maximizing the probability of winning the experiment, as induced in part by the score function $\phi$, with 
selecting the action with maximum performance, i.e., the natural action.

\subsection{Incentive-compatible experiment designs}
\label{section:notation:designs}

%In this section we introduce the concept of incentive-compatible
%experiment design. First,
Let's first define an experiment design
using the concepts of estimand and estimators from Section \ref{sec:estimands}.
\begin{definition}
An \emph{experiment design} $\design = \designTuple$ operates 
in the following steps:
\begin{enumerate}
\item Receives units $\units$ and agents $\agents$, as input.
\item Samples a treatment assignment $\Z$ according to $\psi$.
\item Each agent $i$ picks a treatment version $A_i$, and 
	administers the treatment to the set of its assigned units, $\{u \in \units : Z_u=i\}$.
\item Outcomes on units $\Yobs{..}$ are observed.
\item The winner agent $\hat{\tau}$ is declared according to the rule
	\begin{align}
	\label{eq:winner}
	\winner = \arg \max_{i \in \agents} \left\{ \phi_i(\Yobs{..}) \right\}.
\end{align}
\end{enumerate}

\end{definition}
\medskip

% Probability of win
Given experiment design $\design$ and action profile $\A$, the
probability $P_i(\A|\design)$ that agent $i$ wins the experiment is
given by:
\begin{align}
\label{eq:win_prob}
\mathrm{Pr}\left(\winner=i| \A, \design\right) \defeq P_i(\A | \design)
=P_i(\alpha_i, \A_{-i}|\design).
\end{align}

The randomness in Eq. \eqref{eq:win_prob} comes from the randomness of
observed data $\Yobs{..}$, and the randomization in the treatment
assignment.  The winning probability $P_i(\cdot|\design)$ in Eq.
\eqref{eq:win_prob} is the \emph{expected utility} of agent $i$ under action profile $\A$, because agents care only 
about winning the experiment.
%, and thus will try to maximize the winning probability.
%
%
% Incentive-compatible IC design.
\begin{definition}[Incentive-compatible experiment design] 
  An experiment design $\design=\designTuple$ is
  \emph{incentive-compatible} if the natural action $A_i^\star$ is a dominant strategy
  for each agent $i$, i.e., it maximizes the probability
  \eqref{eq:win_prob} of winning the experiment regardless of other
  agents' actions, such that 
\begin{align}
\label{def:ic}
\arg \max_{\alpha_i \in \actions_i} \{ P_i(\alpha_i, \A_{-i} | \design) \}= 
A_i^\star,
\end{align}
for all actions $\A_{-i}$, and every agent $i$.
\end{definition}

\emph{Remark.} 
In an incentive-compatible
experiment, the score function $\phi$ induces a probability of winning \eqref{eq:win_prob} that is monotonically
increasing with the performance function $\chi$ 
that the experimenter cares
about.  If this monotonicity holds, an agent will prefer to play the
action that maximizes its performance (i.e., the natural action),
because this will also maximize the winning probability.
\smallskip

% DELETE: 
The notation is summarized in Table~\ref{table:notation}.  
We now return to the viral marketing problem that was introduced 
in Example 1. Examples~2(a)-(c) deal with Normally-distributed outcomes,
whereas Examples~3(a)-(g) deal with Poisson-distributed outcomes.
Examples 3(c)-(g) deal specifically with the problem of interference, and work with a more realistic form of the viral marketing problem.

%% %%   THEORY   %%%

%%% DELETE
\renewcommand*{\arraystretch}{1.2}
\begin{table}[t!]
\centering
\small
\caption{Notation for incentive-compatible experimental design}{
\begin{tabular}{lll}
Symbol                  & Description                                                                                       & Value/Domain\\
\hline 
$\units$  & Set of $\maxUnit$ units & $ \{1, 2, \ldots, \maxUnit\}$ \\
$\agents$  & Set of $\maxAgent$ agents                                              & $\{1, 2, \ldots, \maxAgent \}$ \\
 $Z_u$ & Treatment assignment of unit $u$
& $Z_u \in \agents$ \\
$\Z$ & Vector of treatment assignment ($\maxUnit \times 1$) 
& $(Z_1, \ldots, Z_\maxUnit)^\intercal$ \\
$\UperA$                      & Units per agent   
& $\UperA = \maxUnit / \maxAgent$ \\
$\actions$         & Generic action space\\
$\actions_i$         & Action space of agent $i$
& $\actions_i \subseteq \actions$ \\
$A_i$                   & Action of agent $i$ 
& $A_i \in \actions_i$ \\
$\A$ & Complete action profile ($\maxAgent \times 1$)
& $(A_1, \ldots, A_\maxAgent)^\intercal$ \\
% $\placebo$ & Placebo action & $\placebo \in \actions$ \\
$Y_u(\Z, \A)$           & Potential outcome of unit $u$ under assignment $\Z$, actions $\A$
& $Y_u(\Z, \A) \in \Reals$ \\
$\Yobs{ui}$             & Observed outcome for unit $u$ assigned to agent $i$   \\
$\Yobs{.i}$             & Vector of observed outcomes of units assigned to agent $i$ ($\UperA \times 1$)
& $\Yobs{.i} \in \outcomes^{\UperA}$     \\
$\Yobs{..}$             & Vector of observed outcomes of all units
($\maxUnit \times 1$)
& $\Yobs{..} \in \outcomes^{\maxUnit}$ \\
$\chi(\alpha_i)$        & Performance of agent $i$ playing action $\alpha_i$
& $\chi(\alpha_i) \in \Reals$ \\
$\chiA$ & Vector of performances  ($\maxAgent \times 1$)  & $\left(\chi(A_1), \ldots, \chi(A_\maxAgent)\right)^\intercal$\\
$A_i^\star$             & Natural action of agent $i$ -- maximizes performance          
& $A_i^\star \in \actions_i$                 \\
$\chi_i^\star$          & Quality of agent -- performance at natural action
& $\chi_i^\star \in \Reals$ \\
$\tau$                  & Agent of highest quality                                            
& $\tau \in \agents$ \\
$\phi_i(\Yobs{..})$     & Score of agent $i$                                                                                   
& $\phi_i(\Yobs{..}) \in \Reals$  \\
$\phiB$ & Vector of agent scores ($\maxAgent \times 1$) & $\left(\phi_1(\Yobs{..}),  \ldots, \phi_\maxAgent(\Yobs{..})\right)^\intercal$ \\
$\hat{\tau}(\Yobs{..})$ & Estimated agent of highest quality -- agent with maximum score
& $\hat{\tau}(\Yobs{..})  \in \agents$                               \\
$P_i(\A | \design)$     &                                                                            Probability agent $i$ wins under design $\design$, given fixed actions 
$\A$
\normalsize
\end{tabular}}
\label{table:notation}
\end{table}
\normalsize

%% Example 1 .   NORMAL outcomes.
\newexample{Example 2(a). -- Normal outcomes\footnote{This two-agent example (low-quality agent vs. high-quality agent) is different from the example in the original paper published at EC'2015. The example was edited to illustrate a scenario where the low-quality agent prefers to play 
an action that is not its natural action and also reduces 
the winning chances of the high-quality agent. 
In the example of the original paper, the deviation 
from the low-quality agent actually increased the chances of the high-quality agent.}}{ 
Consider the viral
  marketing problem of Example 1, with multiple units 
and two agents, where the outcomes
  of interest are the profit achieved 
from advertising to each unit.
  We  assume that an agent action $\alpha_i= (\mu_i, \sigma_i^2) \in
  \Reals{} \times \Reals{}^+$, determines the mean and variance of the
  profit from advertising to unit $u$, such that, given assignment 
$\Z$, actions $\A$,
\begin{align}
\label{eq:normal}
Y_u(\Z, \A)  \sim \mathcal{N}(\mu_i, \sigma_i^2), \text{ if } 
 A_i=\alpha_i, Z_u=i.
\end{align}
The profit defined in Eq. \eqref{eq:normal} 
can be negative because we assume implicit 
advertisement costs. Furthermore Eq. \eqref{eq:normal} 
implies  no interference between units, and no strategic interference 
between agent actions. We will make this precise in Section~\ref{section:theory}.

%% performance
The experimenter is interested only in expected profit, ignoring the risk.
%
% Thus, using Eq. \eqref{eq:performance} 
%and Eq.\eqref{eq:normal}, the experimenter measures
Thus, the performance of action $\alpha_i=(\mu_i, \sigma_i^2)$ of agent $i$ is
\begin{align}
\label{eq:performance_normal}
\chi(\alpha_i) \defeq \ExCond{Y_u(\Z, \A)}{\A=\alpha_i \ones, Z_u=i} = \mu_i.
\end{align}

Hence, the quality $\chi_i^\star$ of agent $i$ is the maximum $\mu_i$
the agent can achieve over its action space $\actions_i$.
Now, consider an experiment design $\design=\designTuple$, where the
score function $\phi$ is defined as $\phi_i(\Yobs{..}) =
\Yobsbar{.i}$, i.e., the score of agent $i$ is the sample mean profit
from all units assigned to agent $i$.  Ignoring ties, the winning agent is given
using Eq. \eqref{eq:winner}:
	\begin{align}
	\label{eq:winner:normal}
	\winner =\begin{cases} 
		1, &\mbox{if }\Yobsbar{.1} > \Yobsbar{.2}, \\ 
		2, &\mbox{if }\Yobsbar{.1} < \Yobsbar{.2}. 
\end{cases}
\end{align}

By Eq. \eqref{eq:normal}, $\Yobsbar{.i} \sim \mathcal{N}(\mu_i, \sigma_i^2/ \UperA)$, where $k$ is the 
number of units per agent. Hence, the probability that agent 1 wins is
\begin{align}
\label{eq:win_one}
P_1(\A|\design) \defeq \mathrm{Pr}\left(\winner=1| \A, \design\right) =
P(\Yobsbar{.1} > \Yobsbar{.2}) = 
\Phi(\sqrt{\UperA} \frac{\mu_1-\mu_2}{\sqrt{\sigma_1^2 + \sigma_2^2 }}),
\end{align}
where $\Phi$ is the normal cumulative distribution function (CDF).
This design is not incentive-compatible because the winning
probability $P_1(\A|\design)$ is not monotone with
performance $\chi(\alpha_1) = \mu_1$ for action $\alpha_1 = (\mu_1,
\sigma_1^2)$.  For example, an increase in $\mu_1$ may be
associated with an increase in the risk $\sigma_{1}^2$, such that the
 probability of winning is reduced.

 To see this, assume there are only two actions for agent 1, which
 induce mean and variance $\actions_1 = \{(1.5, 100), (2, 20)\}$,
and only one action for agent 2, $ \actions_2 = \{(9, 1)\}$.
 The quality of agent 1 is $\chi_1^\star \defeq \max \{\mu : (\mu, \sigma^2) \in \actions_1\} = 2$ and thus $(2, 20)$ is agent 1's natural action.  
However, when agent 1 plays the natural action, its winning probability is approximately equal to 0.12, whereas action $(1.5, 100)$ yields winnining probability 0.364, approximately.
When agent 1 does not play the natural action, the 
expected value of its outcomes are reduced but 
their variance is increased, thus overall increasing agent 1's chances to win the experiment.
% is dominant for
% agent 2 because it has the same performance as the other action
% $(0.5, 1)$, and has also higher variance; agent 2 prefers higher
% variance because its performance is always lower than agent 1's
% performance.  Thus, 
%
Therefore, this experiment is not incentive compatible 
since agent 1 prefers not to play the natural action.
}

%% Example NORMAL outcomes.
\newexample{Example 2(b). -- Normal outcomes -- High
  risk/reward}{
Continuing Example 2(a), let's suppose that
  the variance of the unit's outcome satisfies $\sigma_i^2 = \mu_i^4$, 
indicating a delicate 
trade-off between expected return and
  risk.  The probability that agent 1 wins is easily obtained from
  \eqref{eq:win_one} as,
\begin{align}
\label{eq:win_one:curved}
P_1(\A|\design) = P(\Yobsbar{.1} > \Yobsbar{.2}) = 
\Phi(\sqrt{\UperA} \frac{\mu_1-\mu_2}{\sqrt{\mu_1^4 + \mu_2^4 }}).
\end{align}
The experiment design is still not incentive-compatible because
\eqref{eq:win_one:curved} is not increasing monotonically
with $\mu_1$.  As before, the better agent will
choose to be more conservative, and will not reveal its quality (maximum
possible $\mu_1$).  However, we will show in
Section~\ref{section:theory} that an incentive-compatible design can
be achieved through the score function $\phi_i(\Yobs{..}) = -
1/\Yobsbar{.i}$, i.e., the negative reciprocal of the sample mean profit. We will show that, with this score function, the risk-reward trade-off in \eqref{eq:win_one:curved} disappears, 
which allows the experimenter to estimate agents' qualities.  }

%%% EXAMPLE. --- POISSON
\newexample{Example 3(a) -- Poisson outcomes}{
Now suppose the outcomes 
  are integer-valued, e.g., 
 representing the number of purchases. In this case, 
we assume that an 
  agent's action $\alpha_i = (\lambda_i) \in \Reals{}^+$ determines the
  purchase rate by unit $u$, such that, given assignment 
$\Z$, actions $\A$,
\begin{align}
\label{eq:poisson}
Y_u(\Z, \A)  \sim \mathrm{Pois}(\lambda_i), \text{ if } 
 A_i=\alpha_i, Z_u=i.
\end{align}

As in Eq.~\eqref{eq:normal} of Example~2(a), Eq. \eqref{eq:poisson} implies no interference.
%% performance
Let's suppose
the experimenter is interested in
performance that is the 
expected purchase rate.
Thus, using Eq. \eqref{eq:performance}, the experimenter measures performance of action $\alpha_i=(\lambda_i)$ of agent $i$, through 
\begin{align}
\label{eq:performance_poisson}
\chi(\alpha_i) \defeq \ExCond{Y_u(\Z, \A)}{\A=\alpha_i \ones, Z_u=i} = \lambda_i.
\end{align}
Hence, the quality $\chi_i^\star$ of agent $i$ is the maximum purchase rate $\lambda_i$ that the agent can achieve 
over its action space $\actions_i$.
Now, consider the experiment design $\design=\designTuple$, where the score function $\phi$ is defined as $\phi_i(\Yobs{..}) = \Yobsbar{.i}$, i.e., the score of agent $i$ is the sample mean purchase rate from all units assigned to agent $i$.  
Ignoring ties, the
winning agent $\hat{\tau}(\Yobs{..})$ is given using Eq. \eqref{eq:winner:normal}.
By the central limit theorem, $\Yobsbar{.i} \tod \mathcal{N}(\lambda_i, \lambda_i/\UperA)$, where ``$\tod$" denotes 
convergence in distribution, and $\UperA$ 
is the number of units per agent.
The probability that agent 1 wins is,
asymptotically,
\begin{align}
\label{eq:win_one:poisson}
P_1(\A|\design) =
P(\Yobsbar{.1} > \Yobsbar{.2}) = 
\Phi(\sqrt{\UperA} \frac{\lambda_1-\lambda_2}{\sqrt{\lambda_1 + \lambda_2} }).
\end{align}

This design is incentive-compatible because 
the winning probability $P_1(\A|\design)$ is  monotone
with the agent performance; for example, an increase in $\lambda_1$ incurs a larger increase in the nominator of Eq. \eqref{eq:win_one:poisson} than in the denominator. By symmetry, the winning probability for agent $i$ is maximized at its natural action.
\smallskip

In Section~\ref{section:no_interference:most_powerful}, we will show
that a more powerful design is possible, i.e., there exists an
experiment design $\design'$ that is incentive-compatible and also
guarantees higher winning chances to the better agent.  }

% DELETE
The examples highlight the challenges in incentive-compatible
experimental design that arise because the experimenter is interested
in some quality of an agent (e.g., expected return) but cannot find a
design that incentivizes agents to play in a way that reveals their
qualities.  The problem that can arise is because of a mismatch
between the score function $\phi$ that is used to declare the winner,
and its effect in inducing a non-cooperative game, and the performance
function $\chi$ that is of interest to the experimenter.

Compared with classical {\em mechanism design theory}, incentive-compatible
experimental design differs in that:
\begin{itemize}
\item In mechanism design, the private information is an agent's
  preferences, whereas here the private information is an agent's
  quality (i.e., the performance of its natural action).
\item In mechanism design, there may be side payments that can be
  made, whereas here the incentives are winner-take-all and depend on
  the outcome of the experiment.
\item In mechanism design, it is standard to appeal to the {\em
    revelation principle} and design a direct-revelation mechanism, in
  which agents report their preference type to the mechanism. In
  comparison, the agents in our setting select an action and the
  designer observes the effect of this action, but not the action
  itself.
\end{itemize}

%%  THEORY for IC design.
%%
%%
% identifiability
% positive-structure
% Main theorem.
\newcommand{\Jacob}{\mathcal{J}_{\phi}}
\section{Theory of incentive-compatible experimental design}
\label{section:theory}

In this section we prove our main result, which provides a
construction of score functions to design incentive-compatible
experiments.  The proof relies on the existence of statistics that can
estimate the individual agent performances $\chi(A_i)$, as the number of units grows large.

\begin{definition}[Identifiable performance, identifying statistic]
\label{def:identification}
An experiment design $\design=\designTuple$ has {\em identifiable performance} $\chi$, if for every fixed action profile $\A$, there exists a statistic $T : \outcomes^\maxUnit \to \Reals^{\maxAgent}$ calculated over data $\Yobs{..}$, such that
\begin{align}
\label{eq:identifying}
\sqrt{\UperA} \left(T(\Yobs{..}) - \chiA\right) \tod \N{\Sa},
\end{align}
as the number of units per agent $\UperA$ grows large; $\mathcal{N}$ is the $n$-variate standard normal, and $\Sa$ is 
the $\maxAgent \times \maxAgent$ covariance matrix of $T$
that can depend on $\A$. The statistic $T$ is 
an {\em identifying statistic} for experiment design $\design$.
\end{definition}

An identifying statistic is important because it
estimates the individual performances $\chi(A_i)$, which are the
quantities of interest to the experimenter. Although finding such a
statistic is not an easy task, one simple strategy is to use sample
quantities, such as averages, and then appeal to the central limit
theorem, or other large-sample asymptotic results. We use this
strategy extensively in this paper. 

However, an identifying statistic $T$ calculated over data $\Yobs{..}$
need not be sufficient for incentive alignment in our winner-take-all
experiments. Thus, we consider score functions defined as
$\phi_i(\Yobs{..}) = f(T_i)$, for an appropriate transformation $f :
\Reals \to \Reals$. The transformation is used to add flexibility in
the design of the score function.  Agents will be evaluated according
to the score vector $\phiB$. The covariance matrix of the score vector
$\phiB$ is, asymptotically, equal to
\begin{align}
\label{eq:Va}
V_f(\A) = \Jacob \Sa \Jacob^\intercal,
\end{align}
where $\Jacob$ is the 
Jacobian of $\phi$ calculated at $\chiA$, 
actually a diagonal matrix with elements $f'(\chi(A_i))$.
Whether an experiment design $\designTuple$ is incentive-compatible or not, 
depends crucially on the matrix $V_f(\A)$ because this matrix 
defines the variances of the scores used to evaluate 
the agents. 
%   GENERAL THEOREM
\begin{theorem}
\label{theorem:general}
Fix agent actions $\A$, and consider design $\design=\designTuple$ that has an 
identifying statistic $T$ with covariance matrix $\Sa$. 
Define the score function as $\phi_i(\Yobs{..}) =f(T_i)$, for some function $f : \Reals \to 
\Reals$, and let $v_{ij}(\A)$ be the $ij$th element 
of $V(\A)$ defined in Eq. \eqref{eq:Va}.
Also define,
\begin{align}
\label{eq:theorem:cov}
v^{ij}_f(\alpha|\A_{-i}) = v_{ii}(\alpha, \A_{-i}) + v_{jj}(\alpha, \A_{-i}) - 
v_{ij}(\alpha, \A_{-i}) - v_{ji}(\alpha, \A_{-i}).
\end{align}
Design $\design$ is incentive-compatible, if, for every agent $i$,
\begin{align}
\label{eq:theorem:general}
\arg \max_{\alpha_i \in \actions_i}   \left\{ \frac{f(\chi(\alpha_i))}
{ v^{ij}_f(\alpha_i|\A_{-i})^{1/2}} \right\}  & = \arg \max_{\alpha_i \in \actions_i}   \left\{ \chi(\alpha_i)\right\} \defeq A_i^\star,
\end{align}
for every agent $j\neq i$, and all actions $\A_{-i}$.
\end{theorem}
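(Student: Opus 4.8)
The plan is to reduce the dominant-strategy condition \eqref{def:ic} to a statement about pairwise comparisons between agent $i$ and each competitor, and then to express each pairwise winning probability asymptotically through the identifying statistic and the transformation $f$. Fix an agent $i$ and an arbitrary competing profile $\A_{-i}$. Since the winner is the agent with the largest score, agent $i$ wins if and only if $\phi_i(\Yobs{..}) > \phi_j(\Yobs{..})$ for all $j \neq i$. The key structural observation is that, with only two agents (the case the examples emphasize, and the case to which the argument cleanly applies), winning reduces to a single pairwise event; I would state the theorem's proof in that two-agent regime, or more generally argue that maximizing each pairwise winning probability simultaneously at $A_i^\star$ suffices for the joint event to be maximized there as well, because increasing $\chi(\alpha_i)$ shifts the marginal of $\phi_i$ favorably against every competitor at once.

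Next I would invoke Definition~\ref{def:identification}: the identifying statistic $T$ satisfies $\sqrt{\UperA}(T(\Yobs{..}) - \chiA) \tod \N{\Sa}$. Applying the Delta method to the coordinatewise transformation $f$, the score vector satisfies $\sqrt{\UperA}(\phiB - f(\chiA)) \tod \N{V_f(\A)}$ with $V_f(\A) = \Jacob \Sa \Jacob^\intercal$ as in \eqref{eq:Va}, where $\Jacob$ is diagonal with entries $f'(\chi(A_k))$. Therefore the difference $\phi_i(\Yobs{..}) - \phi_j(\Yobs{..})$ is asymptotically normal with mean $f(\chi(\alpha_i)) - f(\chi(A_j))$ and variance $\UperA^{-1}\,v^{ij}_f(\alpha_i \mid \A_{-i})$, where $v^{ij}_f$ is exactly the combination in \eqref{eq:theorem:cov} (the variance of a difference of two jointly normal coordinates). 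Hence
\begin{align}
\label{eq:proofplan:prob}
P_i(\alpha_i, \A_{-i} \mid \design) = P\big(\phi_i(\Yobs{..}) > \phi_j(\Yobs{..})\big) \longrightarrow \Phi\!\left( \sqrt{\UperA}\,\frac{f(\chi(\alpha_i)) - f(\chi(A_j))}{v^{ij}_f(\alpha_i \mid \A_{-i})^{1/2}} \right)
\end{align}
in the two-agent case, with $j$ the unique competitor; in the general case the joint winning probability is governed by the vector of such quantities over $j \neq i$.

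Since $\Phi$ is strictly increasing and $A_j$ does not depend on $\alpha_i$, maximizing \eqref{eq:proofplan:prob} over $\alpha_i \in \actions_i$ is equivalent to maximizing $f(\chi(\alpha_i)) \big/ v^{ij}_f(\alpha_i \mid \A_{-i})^{1/2}$ — the additive constant $-f(\chi(A_j))$ in the numerator is immaterial once we observe that shifting the argument of $\Phi$ by a constant does not change the $\arg\max$ only if the denominator were constant; more carefully, one checks that the ordering of \eqref{eq:proofplan:prob} across $\alpha_i$ coincides with the ordering of $f(\chi(\alpha_i))/v^{ij}_f(\alpha_i\mid\A_{-i})^{1/2}$ under the stated hypothesis, since that hypothesis forces the latter $\arg\max$ to be $A_i^\star$, which also maximizes $\chi(\alpha_i)$ and hence $f(\chi(\alpha_i)) - f(\chi(A_j))$ whenever $f$ is monotone on the relevant range. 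Thus condition \eqref{eq:theorem:general} guarantees that the natural action $A_i^\star$ maximizes agent $i$'s asymptotic winning probability for every fixed $\A_{-i}$, which is precisely \eqref{def:ic}.

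The main obstacle I anticipate is the passage from \emph{pairwise} optimality to optimality of the \emph{joint} winning event when $\maxAgent > 2$: the event $\{\phi_i > \phi_j \text{ for all } j\}$ is not simply a product, and its probability depends on the full covariance structure of $(\phi_j - \phi_i)_{j \neq i}$, not just the marginal variances $v^{ij}_f$. A secondary subtlety is the interchange of limit and $\arg\max$ — asserting that $A_i^\star$ maximizes the \emph{asymptotic} winning probability should be stated carefully (e.g., as a statement about the limiting design, or with a uniformity/large-$\UperA$ qualifier), and one must also handle the degenerate case $\chi(\alpha_i) = \chi(A_j)$ and the tie-breaking convention. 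I expect the paper resolves the first point either by restricting attention to two agents in this theorem or by an additional monotonicity argument showing the joint probability inherits the pairwise monotonicity; I would flag that as the crux and otherwise treat the normal-approximation and Delta-method steps as routine.
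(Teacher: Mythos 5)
Your proposal follows essentially the same route as the paper's proof: Delta method applied to the identifying statistic, reduction to the pairwise event $\phi_i(\Yobs{..}) > \phi_j(\Yobs{..})$, asymptotic normality of the score difference with variance $v^{ij}_f$, and the conclusion that condition \eqref{eq:theorem:general} places the argmax at $A_i^\star$. The two obstacles you flag --- the passage from pairwise to joint winning when $\maxAgent>2$, and the mismatch between the ratio $f(\chi(\alpha_i))/v^{ij}_f(\alpha_i|\A_{-i})^{1/2}$ in the hypothesis and the actual argument $\bigl(f(\chi(\alpha_i)) - f(\chi(A_j))\bigr)/v^{ij}_f(\alpha_i|\A_{-i})^{1/2}$ of $\Phi$ --- are both present and unaddressed in the paper's own proof, which likewise argues purely pairwise and silently identifies the two argmaxes.
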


% explain the matrix property.

For a fixed action profile $\A$, the element $v^{ij}_f$ in Eq.
\eqref{eq:theorem:general}, is the variance of the difference between
the scores of agents $i$ and $j$, $\phi_i(\Yobs{..}) -
\phi_j(\Yobs{..})$, as defined in Theorem \ref{theorem:general}.
Thus, Eq. \eqref{eq:theorem:general} is the probability that agent $i$
has a larger score than agent $j$, and implies that this probability
is maximized at the natural action.

Theorem \ref{theorem:general} suggests a recipe to construct
incentive-compatible experiments, as we illustrate through examples in
the following sections.
\begin{itemize}
\item First, one needs to find an identifying
statistic to estimate the performances of agents, i.e., their outcomes
without competition.  A parametric model for the unit outcomes
together with known asymptotic results, such as the central limit
theorem, or the asymptotic normality of the maximum-likelihood
estimator, can provide such an identifying statistic with known
covariance matrix $\Sa$; see also Appendix \ref{section:discussion} 
for a relevant discussion.
\item Second, given the identifying statistic, one then needs to find 
an appropriate transformation $f$ to satisfy Eq. \eqref{eq:theorem:general}. This transformation can be as simple 
as the identity function, as in Example 3(g),
or the reciprocal function, as in Example 2(c).
Intuitively, the design goal for $f$ is to make the denominator of \eqref{eq:theorem:general} less sensitive to agent actions than the nominator.
\end{itemize}

Theorem~\ref{theorem:general} makes no assumption about
interference. In the following sections, we will specialize and apply
Theorem~\ref{theorem:general} on the viral marketing example, both
with and without interference.

%% NO INTERFERENCE

\section{Incentive-compatible experiments without interference}

The setting without interference is formally defined 
through the following assumption.
%
%% Assumption -- no interference
%
\begin{assumption}[No interference]
\label{assumption:no_interference}
There is no strategic interference among agents 
and no interference between units, i.e., for all
assignments $\Z$ and all agent actions $\A$,
\begin{align}
\label{eq:sutva}
& Y_u(\Z, \A) \equiv Y_u(A_i), \doublespace \text{where } 
Z_u = i.
\end{align}
\end{assumption}

Assumption \ref{assumption:no_interference} postulates that the
potential outcome $Y_u(\Z, \A)$ of a unit $u$ assigned to agent $i$,
remains constant as long as agent $i$'s action and 
unit $u$'s assignment to agent $i$ are held fixed.
Under no interference, the distribution of a score function 
defined through an identifying statistic is a 
univariate normal, as shown in the following proposition.
\begin{proposition}
\label{proposition:no_interference}
Consider design $\design = \designTuple$ with an identifying statistic $T$ with covariance matrix $\Sa$.
Let $\phi_i(\Yobs{..})= f(T_i)$, for some function $f : \Reals \to \Reals$, and suppose Assumption \ref{assumption:no_interference} holds. Then, 
for fixed actions $\A$,
\begin{align}
\label{eq:phi}
\sqrt{k} \left(\phi_i(\Yobs{..}) - f(\chi(A_i)\right) \tod \mathcal{N}(0, \sigma^2(A_i)),
\end{align}
where $\sigma^2(A_i) = f'(\chi(A_i))^2 \sigma_{ii}^2$, with 
$\sigma_{ii}^2$ being the $i$th diagonal element of $\Sa$. 
\end{proposition}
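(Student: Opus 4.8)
The statement is essentially a one-dimensional instance of the Delta method, specialized to the no-interference setting. The plan is to first use Assumption~\ref{assumption:no_interference} to argue that the identifying statistic $T$ decouples across agents, so that its asymptotic covariance matrix $\Sa$ is diagonal and the $i$th coordinate $T_i$ satisfies $\sqrt{\UperA}(T_i - \chi(A_i)) \tod \mathcal{N}(0, \sigma_{ii}^2)$ on its own. Concretely, under \eqref{eq:sutva} the outcomes of units assigned to agent $i$ depend only on $A_i$, and since treatment assignment is by complete randomization the blocks of data $\Yobs{.i}$ for different $i$ are independent; hence the joint convergence in \eqref{eq:identifying} factors, and the off-diagonal entries $\sigma_{ij}^2$ ($i \neq j$) vanish. (Strictly, all we need is the marginal statement for coordinate $i$, which follows by the continuous mapping theorem applied to the projection map from the joint convergence in Definition~\ref{def:identification}.)

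Second, I would apply the univariate Delta method to the smooth transformation $f$. Given $\sqrt{\UperA}(T_i - \chi(A_i)) \tod \mathcal{N}(0, \sigma_{ii}^2)$ and $f$ differentiable at $\chi(A_i)$, a first-order Taylor expansion gives $f(T_i) = f(\chi(A_i)) + f'(\chi(A_i))(T_i - \chi(A_i)) + R_{\UperA}$, where the remainder $R_{\UperA}$ is $o_p(|T_i - \chi(A_i)|) = o_p(\UperA^{-1/2})$ by the consistency of $T_i$ (which itself follows from \eqref{eq:identifying}). Multiplying by $\sqrt{\UperA}$ and invoking Slutsky's theorem yields
\begin{align}
\sqrt{\UperA}\left(\phi_i(\Yobs{..}) - f(\chi(A_i))\right) = f'(\chi(A_i)) \cdot \sqrt{\UperA}(T_i - \chi(A_i)) + o_p(1) \tod \mathcal{N}\!\left(0,\, f'(\chi(A_i))^2 \sigma_{ii}^2\right),
\end{align}
which is exactly \eqref{eq:phi} with $\sigma^2(A_i) = f'(\chi(A_i))^2 \sigma_{ii}^2$. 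This is also consistent with the general covariance formula \eqref{eq:Va}, since under no interference the Jacobian $\Jacob$ and $\Sa$ are both diagonal, so $V_f(\A)$ is diagonal with $i$th entry $f'(\chi(A_i))^2 \sigma_{ii}^2$.

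**Main obstacle.** The argument is routine asymptotics, so there is no deep obstacle; the only point requiring care is the control of the Delta-method remainder $R_{\UperA}$. This needs $f$ to be differentiable \emph{at the point} $\chi(A_i)$ (continuous differentiability in a neighborhood suffices and is the cleanest hypothesis) and needs $T_i \to_p \chi(A_i)$, which is guaranteed by \eqref{eq:identifying}. One should also be mildly careful that the $o_p$ bound on the remainder is uniform enough to survive multiplication by $\sqrt{\UperA}$ — this is the standard stochastic-Taylor-expansion step and goes through because $T_i - \chi(A_i) = O_p(\UperA^{-1/2})$ while $R_{\UperA}/(T_i - \chi(A_i)) \to_p 0$. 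No nontrivial verification beyond citing the Delta method and Slutsky's theorem is needed.
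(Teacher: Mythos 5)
Your proposal is correct and follows essentially the same route as the paper's proof: under Assumption~\ref{assumption:no_interference} the covariance matrix $\Sa$ is diagonal, the marginal convergence $\sqrt{\UperA}(T_i - \chi(A_i)) \tod \mathcal{N}(0,\sigma_{ii}^2)$ follows from Definition~\ref{def:identification}, and the univariate Delta method gives \eqref{eq:phi}. Your additional care with the Taylor remainder and the projection step only makes explicit what the paper cites implicitly.
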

\begin{proof}
By Assumption \ref{assumption:no_interference} (no interference),
the covariance matrix $\Sa$ of $T$ is diagonal 
with elements $\sigma_{ii}^2$. Thus, by  
definition of the identifying statistic,
\begin{align}
\sqrt{k} (T_i-\chi(A_i)) \tod \mathcal{N}(0, \sigma_{ii}^2). \nonumber
\end{align}
Since, $\phi_i(\Yobs{..}) = f(T_i)$, Eq. \eqref{eq:phi} 
follows from a simple application of the Delta theorem;
see, for example,
Bickel and Docksum \cite[Chapter 5]{bickel2001mathematical}, or Cox \cite{cox1998delta}.
\end{proof}

Proposition \ref{proposition:no_interference} provides 
the asymptotic distribution of the score function, given 
an identifying statistic and a known transformation $f$, 
when there is no interference. This will be useful to 
derive the winning probabilities for agents in the experiment.
We first illustrate Proposition
\ref{proposition:no_interference}, and 
then show how it can be used to simplify
the conditions of the more general Theorem \ref{theorem:general}.

% Example with pivotal score
\newexample{Example 2(c)}{
We continue from Example 2(b), 
where agent $i$'s action is $A_i=(\mu_i)$,
and  $\Yobsbar{.i} \sim \mathcal{N}(\mu_i,
  \mu_i^4/\UperA)$, where $\UperA$ is the 
number of units per agent.
The statistic 
$T(\Yobs{..}) = (\Yobsbar{.1}, \Yobsbar{.2}, \ldots, \Yobsbar{.\maxAgent})^\intercal \equiv T$, is an identifying statistic, since 
$\chiA = (\mu_1, \mu_2, \ldots, \mu_{\maxAgent})^\intercal \defeq \b{\mu}$, and
\begin{align}
\sqrt{\UperA} (T - \b{\mu}) \tod \mathcal{N}(0, \Sigma),
\end{align}
where $\Sigma = \mathrm{diag}(\mu_1^4, \ldots, \mu_\maxAgent^4)$, is the diagonal matrix with elements $\mu_i^4$.

Consider the score functions $\phi_i(\Yobs{..}) =
  1/T_i =1/\Yobsbar{.i}$, i.e., 
$f(x)=1/x$, in the notation of Proposition 
\ref{proposition:no_interference}. Using the result in Proposition
\ref{proposition:no_interference}, $\sigma^2(A_i) = f'(\mu_i)^2 \mu_i^4 = 1$, and thus
\begin{align}
\label{eq:phi_const}
\sqrt{\UperA}(\phi_i(\Yobs{..}) - 1/\mu_i) \tod \mathcal{N}(0,1).
\end{align}
}
The variance of the score function in Eq. \eqref{eq:phi_const} 
is stabilized. The following theorem shows 
that such variance stabilization can lead to incentive-compatible 
designs, when there is no interference.
%
%% THEOREM: Sufficient condition for IC design.
%
\begin{theorem}
\label{theorem:no_interference}
Consider design $\design = \designTuple$ with an identifying statistic $T$ with covariance matrix $\Sa$.
Suppose Assumption \ref{assumption:no_interference} holds. If, for every agent $i$,
\begin{align}
\label{eq:condition:simple}
 & \phi_i(\Yobs{..}) = f(T_i), \text{ where } f:\Reals\to\Reals, \\ 
\label{eq:condition:const}
 & \Var{\phi_i(\Yobs{..})}  = \mathrm{const.},\\
% \text{ } \nonumber \\  % want to leave a blank line?
\label{eq:condition:monotone}
 & \arg \max_{\alpha_i \in \actions_i} f(\chi(\alpha_i)) = \arg \max_{\alpha_i \in \actions_i} \{\chi(\alpha_i)\} \defeq A_i^\star,
\end{align}
then design $\design$ is incentive-compatible.
\end{theorem}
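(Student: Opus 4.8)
The plan is to obtain Theorem~\ref{theorem:no_interference} as a corollary of the general Theorem~\ref{theorem:general}. The idea is to show that, once Assumption~\ref{assumption:no_interference} and the constant-variance requirement~\eqref{eq:condition:const} are both in force, the quantity $v^{ij}_f(\alpha_i\mid\A_{-i})$ sitting in the denominator of the left-hand optimization in condition~\eqref{eq:theorem:general} is a positive constant that does not depend on $\alpha_i$. That collapses the optimization to $\arg\max_{\alpha_i\in\actions_i} f(\chi(\alpha_i))$, which by the monotonicity hypothesis~\eqref{eq:condition:monotone} equals $A_i^\star=\arg\max_{\alpha_i\in\actions_i}\chi(\alpha_i)$, and Theorem~\ref{theorem:general} then hands us incentive-compatibility.

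In more detail, I would proceed in three short steps. \emph{(i) Diagonalization.} Under Assumption~\ref{assumption:no_interference} the identifying statistic has a diagonal covariance matrix $\Sa$ --- this is exactly the observation already used in the proof of Proposition~\ref{proposition:no_interference} --- and since $\phi_i(\Yobs{..})=f(T_i)$ depends on $T_i$ alone, the Jacobian $\Jacob$ in Eq.~\eqref{eq:Va} is diagonal by construction with entries $f'(\chi(A_i))$. Hence $V_f(\A)=\Jacob\,\Sa\,\Jacob^\intercal$ is diagonal, so $v_{ij}(\A)=v_{ji}(\A)=0$ for every $i\neq j$, and Eq.~\eqref{eq:theorem:cov} reduces to $v^{ij}_f(\alpha_i\mid\A_{-i})=v_{ii}(\alpha_i,\A_{-i})+v_{jj}(\alpha_i,\A_{-i})$. \emph{(ii) Constant diagonal.} By Proposition~\ref{proposition:no_interference}, $v_{ii}(\A)=f'(\chi(A_i))^2\sigma_{ii}^2=\sigma^2(A_i)$ is precisely the (asymptotic) variance of $\phi_i(\Yobs{..})$; condition~\eqref{eq:condition:const} forces this to be constant for \emph{every} agent, so $v_{ii}\equiv c_i$ and $v_{jj}\equiv c_j$ with $c_i,c_j>0$ independent of all actions. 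Therefore $v^{ij}_f(\alpha_i\mid\A_{-i})\equiv c_i+c_j>0$, free of $\alpha_i$. \emph{(iii) Reduction.} For every $j\neq i$ and every $\A_{-i}$, $\arg\max_{\alpha_i\in\actions_i}\{f(\chi(\alpha_i))/v^{ij}_f(\alpha_i\mid\A_{-i})^{1/2}\}=\arg\max_{\alpha_i\in\actions_i}f(\chi(\alpha_i))=A_i^\star$ by~\eqref{eq:condition:monotone}, which is exactly hypothesis~\eqref{eq:theorem:general} of Theorem~\ref{theorem:general}; hence $\design$ is incentive-compatible. $\blacksquare$

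I do not expect a genuine obstacle here --- once the structural facts are lined up the argument is bookkeeping --- but the one point that needs care is the \emph{dual} role of the two assumptions: no interference is what kills the off-diagonal covariances $v_{ij},v_{ji}$, while constant variance is what strips the surviving diagonal terms $v_{ii},v_{jj}$ of their dependence on $\alpha_i$, and it is only the combination of the two that neutralizes the denominator of the Theorem~\ref{theorem:general} criterion; one should also record a mild nondegeneracy condition so that $c_i+c_j>0$. As a sanity check, a self-contained derivation that bypasses Theorem~\ref{theorem:general} is equally short: by Proposition~\ref{proposition:no_interference} together with diagonality, the scores $\phi_1,\dots,\phi_\maxAgent$ are asymptotically independent normals whose variances, by~\eqref{eq:condition:const}, do not move with the agents' actions, so raising $f(\chi(A_i))$ shifts $\phi_i$ upward in the first-order stochastic-dominance order and hence increases $\Pr(\phi_i>\max_{j\neq i}\phi_j)$, with the maximizing action pinned down as $A_i^\star$ by~\eqref{eq:condition:monotone}.
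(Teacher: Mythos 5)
Your proposal is correct and follows essentially the same route as the paper: invoke the no-interference assumption to make $\Sa$ (and hence $V_f(\A)$) diagonal, use the constant-variance condition to make the surviving diagonal entries action-independent, and then reduce the criterion of Theorem~\ref{theorem:general} to the monotonicity condition~\eqref{eq:condition:monotone}. The only (immaterial) difference is that you allow the constants $c_i$ and $c_j$ to differ across agents where the paper writes a single constant $c$, giving $v^{ij}_f\equiv 2c$; either way the denominator is free of $\alpha_i$ and the conclusion follows.
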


%% Discussion.
Condition \eqref{eq:condition:const} is related to
variance-stabilizing transformations in statistics, which also play
an important role in hypothesis testing; we discuss this relationship
in Appendix \ref{section:varstab}.

\newexample{Example 2(d). -- Normal outcomes -- High risk/reward} {
Continuing from Example 2(c), we consider the high risk-reward setting of the viral marketing problem,
where an agent's action is to pick an expected return, i.e., 
$A_i = (\mu_i)$, and the winning probability is given by
\begin{align}
\label{eq:curved}
P_1(\A|\design)= \Phi(\sqrt{\UperA} \frac{\mu_1 - \mu_2}{\sqrt{\mu_1^4 + \mu_2^4}}).
\end{align}

The performance function is $\chi(\alpha_i) = \mu_i$, 
and thus the natural action is $A_i^\star = \arg \max_{\alpha_i \in \actions_i} \{ \alpha_i \}$.
It was shown that design $\design$ in Example 2(b) --using 
the sample mean as the score function-- 
is not incentive-compatible. Consider 
instead a design $\design'$ with score function
$\phi_i(\Yobs{..}) = -1/\Yobsbar{.i}$. 
Using the result of Example 2(c), 
\begin{align}
\label{eq1}
\sqrt{\UperA} \left(\phi_i(\Yobs{..}) - (-1/\mu_i)\right) \tod \mathcal{N}(0, 1).
\end{align}

Condition \eqref{eq:condition:simple} is satisfied by definition 
of $\phi_i$. Condition 
\eqref{eq:condition:const} is also satisfied, because 
the variance of $\phi_i(\Yobs{..})$ in Eq. \eqref{eq1} is constant. Furthermore, 
\begin{align}
\arg \max_{\alpha_i \in \actions_i}   \left\{ f(\chi(\alpha_i)) \right\}
= \arg \max_{\alpha_i \in \actions_i}   \left\{ -1/\alpha_i \right\} =  \arg \max_{\alpha_i \in \actions_i} \{ \alpha_i \} = A_i^\star \nonumber,
\end{align}
which satisfies Condition \eqref{eq:condition:monotone}.
Thus, all conditions of Theorem \eqref{theorem:no_interference} are 
fulfilled. It follows that the new design $\design'$ is incentive-compatible.}

By construction of the probabilistic
model in Example 2(b), there is a very delicate trade-off
between expected return (agent performance) and risk; for example, if an
agent doubles its performance, then the risk will quadruple. In such
situations, it is a bad idea to adopt the sample mean as the score
statistic.  Intuitively, Eq. \eqref{eq:curved} shows that the
higher-quality agent will try more conservative actions, thus hiding
its true quality. However, if agents are scored according to the
negated reciprocal of their sample mean, the probability that an agent wins
increases monotonically with an agent's performance. 
Thus, agents have
the incentive to select actions that maximize their performance, and thus it is a dominant
strategy to select their natural action. 

%% Most powerful designs.
\subsection{Powerful incentive-compatible experiment designs}
\label{section:no_interference:most_powerful}

Given the choice of two incentive-compatible designs, it is natural to
prefer the design in which the highest-quality agent has the highest
probability of winning. We formalize this intuition through the
following definition.
%
% Powerful IC design.
\begin{definition}[Powerful incentive-compatible design]
\label{definition:powerful}
Consider two experiment designs $\design$ and $\design'$ that are
both incentive-compatible and operate on the same set of units
$\units$. Let $\tau$ be the agent of highest quality.  Design
$\design'$ is (weakly) \emph{more powerful} than design $\design$ if the
probability that agent $\tau$ wins 
in the dominant strategy equilibrium
is higher in $\design'$ than
$\design$;
i.e.,
\begin{align}
\label{eq:powerful}
 P_{\tau}(\A^\star | \design')  \ge  P_{\tau}(\A^\star | \design),
\end{align}
where $\A^\star$ is the natural action profile, which is the same in both 
designs.
\end{definition}

%% DELETE
In the following theorem, we give a simple case 
where we can
transform an incentive-compatible design into a more powerful one. 
%% MORE_POWERFUL EXPERIMENT
\begin{theorem}
\label{theorem:rao}
Consider an incentive-compatible design $\design = \designTuple$, 
where action sets $\actions_i \subseteq \Reals$ are compact, 
and performance $\chi$ is one-to-one and continuous.
Let,
\begin{align}
\label{eq:rao}
\sqrt{\UperA} \left(\phi_i(\Yobs{..})-\chi(A_i)\right) \tod
\mathcal{N}(0, \sigma^2(A_i)),
\end{align}
where function $\sigma^2 : \actions \to \Reals^{+}$ satisfies
\begin{align}
\label{eq:powerful:var}
\chi(\alpha_i') \ge \chi(\alpha_i) \Rightarrow \sigma^2(\alpha_i') \ge \sigma^2(\alpha_i),
\end{align}
for every agent $i$, and all actions $\alpha_i', \alpha_i \in \actions_i$.\footnote{Condition~\eqref{eq:powerful:var} posits 
that an agent cannot increase its expected 
score without increasing the variance of the score. This is a reasonable assumption in practice because actions that 
do increase the expected score without increasing the variance, 
are strongly preferred.}

Consider a design $\design' = (\psi, \phi')$,
where $\phi'_i(\Yobs{..}) = \nu(\phi_i(\Yobs{..}))$, for each agent $i$, with $\nu(\cdot)$ defined by
\begin{align}
\label{eq:rao:nu}
\nu(y) = \int^y \frac{1}{\sqrt{\sigma(\chi^{-1}(z)})} dz.
\end{align}
Then, design $\design'$ is incentive-compatible and more powerful than $\design$, if $\nu(\cdot)$ is convex,
or $1/\sqrt{\sigma^2(\chi^{-1}(\cdot))}$ and 
$\sigma^2(\chi^{-1}(\cdot))$ are both convex.
\end{theorem}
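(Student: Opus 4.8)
The strategy is to verify directly that the transformed design $\design'$ satisfies the three conditions of Theorem \ref{theorem:no_interference}, and then separately argue the power improvement. First I would apply the Delta method to the convergence \eqref{eq:rao}: since $\phi'_i = \nu(\phi_i)$ with $\nu$ differentiable, $\nu'(y) = 1/\sqrt{\sigma(\chi^{-1}(y))}$, we get $\sqrt{\UperA}(\phi'_i(\Yobs{..}) - \nu(\chi(A_i))) \tod \mathcal{N}(0, \nu'(\chi(A_i))^2 \sigma^2(A_i))$. Evaluating the asymptotic variance: $\nu'(\chi(A_i))^2 \sigma^2(A_i) = \frac{1}{\sigma(\chi^{-1}(\chi(A_i)))}\cdot\sigma^2(A_i) = \frac{\sigma^2(A_i)}{\sigma(A_i)} = \sigma(A_i)$. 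Hmm — that is not constant, so this is \emph{not} a pure variance-stabilization argument; instead I would observe that the score function must still make the ratio in Eq. \eqref{eq:theorem:general} of Theorem \ref{theorem:general} monotone in $\chi(\alpha_i)$. Under no interference the off-diagonal $v_{ij}$ vanish (Proposition \ref{proposition:no_interference}), so the relevant quantity reduces to $\nu(\chi(\alpha_i)) / (\sigma(\alpha_i) + \sigma(A_j))^{1/2}$; since by \eqref{eq:powerful:var} $\sigma(\alpha_i)$ is a monotone function of $\chi(\alpha_i)$ and $\nu$ is composed of an increasing integrand, I would show the whole ratio is increasing in $\chi(\alpha_i)$, hence maximized at the natural action — establishing incentive-compatibility via Theorem \ref{theorem:general}. (Monotonicity of $\nu$ follows because the integrand $1/\sqrt{\sigma(\chi^{-1}(z))}$ is positive.)

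Second, for the power claim, I would write the winning probability of the top agent $\tau$ under each design. Under $\design$, by \eqref{eq:rao} and independence of scores (no interference), $P_\tau(\A^\star|\design) = \Phi\!\big(\sqrt{\UperA}\,(\chi(A_\tau^\star) - \chi(A_j^\star)) / (\sigma^2(A_\tau^\star)+\sigma^2(A_j^\star))^{1/2}\big)$ in the two-agent case (and a product/union-type expression for more agents). Under $\design'$ the analogous expression has numerator $\nu(\chi(A_\tau^\star)) - \nu(\chi(A_j^\star))$ and denominator $(\sigma(A_\tau^\star)+\sigma(A_j^\star))^{1/2}$. Since $\Phi$ is increasing, it suffices to show the argument is larger under $\design'$, i.e.
\begin{align}
\frac{\nu(\chi(A_\tau^\star)) - \nu(\chi(A_j^\star))}{\big(\sigma(A_\tau^\star)+\sigma(A_j^\star)\big)^{1/2}} \;\ge\; \frac{\chi(A_\tau^\star) - \chi(A_j^\star)}{\big(\sigma^2(A_\tau^\star)+\sigma^2(A_j^\star)\big)^{1/2}}. \nonumber
\end{align}
Writing $a = \chi(A_\tau^\star) \ge b = \chi(A_j^\star)$ and $s(x) = \sigma(\chi^{-1}(x))$, convexity of $\nu$ gives $\nu(a) - \nu(b) \ge \nu'(b)(a-b) = (a-b)/\sqrt{s(b)}$, and I would pair this with a bound relating $1/\sqrt{s(b)}$ to $(\sigma^2(A_\tau^\star)+\sigma^2(A_j^\star))^{1/2}/(\sigma(A_\tau^\star)+\sigma(A_j^\star))^{1/2}$ — essentially using that $\sqrt{s(b)} \le \sqrt{s(a)}$ and an elementary inequality of the form $(p+q)^{1/2}/(p^2+q^2)^{1/2} \le$ something controlled by $1/\sqrt{\max(p,q)}$ up to constants; the alternative hypothesis that $1/\sqrt{\sigma^2(\chi^{-1}(\cdot))}$ and $\sigma^2(\chi^{-1}(\cdot))$ are both convex would be used as a substitute route to the same monotonicity when $\nu$ itself is not convex, via a Jensen-type midpoint comparison applied to the map $x \mapsto \nu(x)$ on the relevant interval.

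The main obstacle I expect is exactly this last inequality chaining: converting the local convexity bound $\nu(a)-\nu(b)\ge \nu'(b)(a-b)$ into the global two-point ratio comparison requires carefully handling how the denominators $(\sigma+\sigma)^{1/2}$ versus $(\sigma^2+\sigma^2)^{1/2}$ interact with the monotonicity \eqref{eq:powerful:var}, and making the two alternative sufficient conditions ("$\nu$ convex" or "$1/\sqrt{\sigma^2\circ\chi^{-1}}$ and $\sigma^2\circ\chi^{-1}$ both convex") land on the same conclusion. I would also need to be careful that the indefinite integral in \eqref{eq:rao:nu} is well-defined (any antiderivative works since only differences $\nu(a)-\nu(b)$ appear) and that compactness of $\actions_i$ plus continuity of $\chi$ guarantee $\chi^{-1}$ is continuous on the relevant compact range so that $\sigma^2(\chi^{-1}(\cdot))$ is bounded away from $0$ and $\nu$ is genuinely $C^1$. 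For more than two agents, I would reduce the power comparison to pairwise comparisons between $\tau$ and each competitor, using that under no interference the joint winning event factors appropriately, so the two-agent inequality suffices.
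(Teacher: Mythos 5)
Your overall shape (Delta method, then compare the arguments of $\Phi$ via convexity) matches the paper, but there are two genuine problems. First, you took the integrand in \eqref{eq:rao:nu} literally as $1/\sqrt{\sigma(\chi^{-1}(z))}$ and concluded that the transformed score has asymptotic variance $\sigma(A_i)$, hence that this is ``not a pure variance-stabilization argument.'' That reading is an artifact of a typo in the displayed formula: the intended integrand --- the one used in the appendix restatement and in Example~3(b) --- is $1/\sqrt{\sigma^2(\chi^{-1}(z))}$, so that $\nu'(\chi(A_i))^2\,\sigma^2(A_i)=1$ and
\begin{align}
\sqrt{\UperA}\left(\phi_i'(\Yobs{..})-\nu(\chi(A_i))\right)\tod\mathcal{N}(0,1). \nonumber
\end{align}
With constant asymptotic variance, incentive compatibility is immediate from Theorem~\ref{theorem:no_interference}: the integrand is positive, so $\nu$ is increasing and Condition~\eqref{eq:condition:monotone} is preserved. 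Your substitute argument does not work as stated: from ``the numerator $\nu(\chi(\alpha_i))$ is increasing and the denominator $(\sigma(\alpha_i)+\sigma(A_j))^{1/2}$ is increasing'' you cannot conclude that the ratio is increasing --- that non sequitur is precisely the failure mode of Example~2(b), where $\mu_1/\sqrt{\mu_1^4+\mu_2^4}$ has increasing numerator and denominator yet is not monotone. The whole point of the construction is that the denominator becomes constant.

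Second, the power comparison is left open at the one step that requires an idea. Writing $\chi_i\defeq\chi(A_i)$, $\sigma_i^2\defeq\sigma^2(A_i)$ with $\chi_i\ge\chi_j$, the correct target is
\begin{align}
\frac{\nu(\chi_i)-\nu(\chi_j)}{\sqrt{2}}\;\ge\;\frac{\chi_i-\chi_j}{\sqrt{\sigma_i^2+\sigma_j^2}}, \nonumber
\end{align}
with $\sqrt{2}$ (not $(\sigma_i+\sigma_j)^{1/2}$) on the left because of the stabilized unit variance. The paper closes this cleanly: convexity of $\nu$ gives $\nu(\chi_i)-\nu(\chi_j)\ge\nu'(\chi_j)(\chi_i-\chi_j)$; the identity $\nu'(\chi_j)^2\sigma_j^2=1$ together with \eqref{eq:powerful:var} (so $\sigma_i^2\ge\sigma_j^2$) yields $\nu'(\chi_j)^2(\sigma_i^2+\sigma_j^2)\ge 2$, i.e.\ $\nu'(\chi_j)\ge\sqrt{2/(\sigma_i^2+\sigma_j^2)}$, and the two bounds chain. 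Your proposed ``elementary inequality'' relating $(p+q)^{1/2}$ to $(p^2+q^2)^{1/2}$ is aimed at the wrong quantity because of the variance miscomputation, and you acknowledge you do not see how to finish it. Finally, in the second sufficient condition the convexity is applied to the integrand $1/\sqrt{\sigma^2(\chi^{-1}(\cdot))}$ via the average-over-interval-versus-midpoint (Hermite--Hadamard) inequality, and then convexity of $\sigma^2(\chi^{-1}(\cdot))$ is used at the midpoint $(\chi_i+\chi_j)/2$; it is not a Jensen argument applied to $\nu$ itself.
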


The variance of the new score function,  $\Var{\phi_i'(\Yobs{..})}$, is 
constant, because function $\nu$ defined in Eq. \eqref{eq:rao} is a
variance-stabilizing transformation \cite{cox1998delta}.
This fulfills Condition \eqref{eq:condition:const} of Theorem 
\ref{theorem:no_interference}, while the monotonicity \eqref{eq:powerful:var} of $\sigma(\cdot)$
maintains the monotonicity Condition \eqref{eq:condition:monotone}.
The new design 
$\design'$ is thus incentive-compatible.

%
% Poisson example.
\newexample{Example 3(b) -- Poisson outcomes}{
Continuing from Example 3(a), the actions are $A_i = (\lambda_i) \in \Reals^{+}$ 
with performance $\chi(A_i) = \lambda_i$, 
while the score statistic is $\phi_i(\Yobs{..}) =
  \Yobsbar{.i}$; thus, $\sqrt{\UperA} \left(\phi_i(\Yobs{..})-\lambda_i \right) \tod \mathcal{N}(0, \lambda_i)$. Let agent 1 be the best agent. Consider a new design $\design'$ with the transformation 
\begin{align}
\nu(y) = \int^y \frac{1}{\sqrt{\sigma(\chi^{-1}(z)})} dz = 
 \int^y \frac{1}{\sqrt{z}} dz = 2 \sqrt{z},\nonumber
\end{align}
and score function $\phi_i'(\Yobs{..}) = 
\nu(\phi_i(\Yobs{..})) = 2 \sqrt{\Yobsbar{.i}}$. 
Design $\design'$ is incentive-compatible and more powerful 
than design $\design$ of Example 3(a) by Theorem \ref{theorem:rao}, since
$1/\sqrt{\sigma^2(\chi^{-1}(z))} = 1/\sqrt{z}$ and 
$\sigma^2(\chi^{-1}(z)) = z$, are both convex.
Another way to see this is through Proposition \ref{proposition:no_interference}, which implies
$\sqrt{k}\left(\phi'_i(\Yobs{..}) - 2\sqrt{\lambda_i}\right) \tod \mathcal{N}(0, 1)$.
Thus, the probability that agent 1 wins is
\begin{align}
\label{eq:poisson_win_powerful}
P_{1}(\A|\design')= \Phi(\sqrt{2\UperA}(\sqrt{\lambda_{1}}- \sqrt{\lambda_{2}})).
\end{align}

We can verify $P_{1}(\A|\design') > P_{1}(\A|\design)$ by comparing 
Eq. \eqref{eq:poisson_win_powerful} with Eq. \eqref{eq:win_one:poisson}:
\begin{align}
 \Phi\left(\sqrt{2\UperA}(\sqrt{\lambda_{1}}- \sqrt{\lambda_{2}})\right) & >  \Phi\left(\sqrt{\UperA}
	 \frac{\lambda_{1} - \lambda_{2}}{\sqrt{\lambda_{1} + \lambda_{2}}}\right)   \Leftrightarrow  \sqrt{2} (\sqrt{\lambda_{1}}- \sqrt{\lambda_{2}}) > \frac{\lambda_{1} - \lambda_{2}}{\sqrt{\lambda_{1} + \lambda_{2}}} \nonumber.
\end{align}
The last inequality always holds because it reduces to $(\sqrt{\lambda_{1}}- \sqrt{\lambda_{2}})^2 > 0$.
}

In Example 3(b), the better agent (agent 1) has higher chances of winning in the new design $\design'$.
Since $\design'$ is also incentive-compatible, it follows that 
$\design'$ is more powerful than $\design$.
Intuitively, the square root transformation in the new design 
stabilizes the variance -- there is no denominator 
in Eq. \eqref{eq:poisson_win_powerful} -- which 
achieves incentive-compatibility through Theorem \ref{theorem:no_interference}. 

\subsection{Using transformations for more powerful designs}

If there is only one block and transformation $\nu(\cdot)$ 
in Theorem \ref{theorem:rao} is order-preserving,
then the transformation might not affect the power 
of the experiment design.
For a simple argument, let $X, Y$ be two positive random variables,
then $P(X>Y) = P(\nu(X) > \nu(Y))$ if $\nu$ is
order-preserving.\footnote{A similar observation can be made in
  regard to the use of score functions $\phi_i$ to achieve incentive
  compatibility: order-preserving transformations $\phi_i$ do not
  affect incentives. Note, for example, that the negated reciprocal
  transformation that aligns incentives in Example 2(d) is not
  order-preserving (e.g., $2 > -1$ but $-1/2 < -1/(-1)$). The outcomes
  in that example could take negative values; if outcomes were
  constrained to be positive, incentives would not be affected.}

However, when there are multiple blocks, 
a transformation can improve the power of the design
% (and 
%th wcould change a  design's incentives)  
even when the transformation 
is order-preserving.
In the following simulation study, 
we expand the design introduced in Example 3(a) to 
multiple blocks in order to illustrate the positive effect of the square-root 
transformation, which is variance-stabilizing for 
Poisson outcomes, on the power of the design.
In this simulation study we focus on power because 
the design is already incentive-compatible, as shown 
in Example 3(a).\footnote{The introduction 
of multiple blocks does not affect the incentives because 
incentive compatibility was defined with respect 
dominant-strategy equilibrium and outcomes are sampled 
independently across blocks. Multiple blocks could affect 
incentives if agents were able to benefit from making strategic 
trade-offs between blocks, e.g., be conservative in one block 
and be risky in another.}

%s
Consider a design $\design$ with two agents and two blocks. 
Agent $i$ plays action $\lambda_{ib}$ in block $b$; 
we set $\lambda_{11}=5, \lambda_{12}=10$ for agent 1, 
and $\lambda_{21}=4.25, \lambda_{22}=9.95$, and thus 
agent 1 is the high-quality agent.
We repeat the following process $10,000$ times.
First we fix the number of units per block, say $k$.
Second, we sample $Y_{uib} \sim \mathrm{Pois}(\lambda_{ib})$ 
i.i.d. for every unit $u$
in block $b$, where $Y_{uib}$ indicates the total number of sales for unit $u$ 
assigned to agent $i$ in block $b$.
We then use the sample mean as the default score function, 
but also apply a transformation $\nu$. In particular, the 
total score of agent $i$ is $\sum_{b=1}^2 \nu(\overline{Y_{. ib}})$, 
where $Y_{. ib} = (Y_{uib})$ is the vector of unit outcomes 
for agent $i$ in block $b$, and $\nu$ is the transformation.
The winner is the agent with highest score.
After all $10,000$ repetitions we report the \%wins by agent 1.

The results are shown in Table \ref{table:multiblock} where we compare
the identity transformation against the square-root transformation 
for multiple number of units per block.
We observe that the square-root transformation, which is also 
the variance stabilizing transformation according to Theorem \ref{theorem:rao}, 
increases the winning chances of agent 1 (high-quality agent).
As the number of units per block increases the sample means 
get closer to the actions played by the agents (i.e., values $\lambda_{ib}$) 
and thus agent 1 wins almost with probability one at 
both designs.

For an intuition why variance stabilizing works with multiple 
blocks, consider the argument at the beginning of this section.
In particular, let $X_1\defeq \overline{Y}_{.11}, X_2 \defeq \overline{Y}_{.12}$ be the sample means of agent 1 in blocks 1 and 2, respectively, and let $Y_1 \defeq \overline{Y}_{.21}, Y_2\defeq\overline{Y}_{.22}$, be the respective sample means
for agent 2. 
If there was no transformation the winning probability for agent 1 would be $P(X_1 + X_2 > Y_1 + Y_2)$. 
With the square-root transformation this 
probability is $P(\sqrt{X1}+\sqrt{X_2} > \sqrt{Y_1}+\sqrt{Y_2})$,
which is generally larger than the probability without transformation. 
Intuitively, the square-root transformation accentuates 
the differences in the mean-rates of the two agents (i.e., the actions 
$\lambda_{ib}$) and downplays the differences in the tails.
The formal proof is a simple extension of Theorem \ref{theorem:rao}, which uses convexity/concavity arguments.

\begin{table}[ht]
\centering
\caption{Probability agent 1 wins in a design with two blocks and two possible 
score transformations. Probabilities were calculated over 10,000 repetitions.}
\begin{tabular}{lcc}
  \hline
 &\multicolumn{2}{c}{Transformation  $\nu$} \\
\#units/block & $\nu(x)=x$ & $\nu(x) = \sqrt{x}$ \\ 
  \hline
5 & 0.62 & 0.65 \\ 
  10 & 0.67 & 0.72 \\ 
  25 & 0.77 & 0.82 \\ 
  50 & 0.85 & 0.91 \\ 
  100 & 0.93 & 0.97 \\ 
  500 & 1.00 & 1.00 \\ 
  1000 & 1.00 & 1.00 \\ 
   \hline
\end{tabular}
\label{table:multiblock}
\end{table}

%% SECTION --  interference.
\section{Incentive-compatible experiments with interference}
\label{section:interference}

We now consider strategic interference, whereby an action of an agent
can affect the outcomes of units assigned to another agent.
Therefore, agent scores calculated on individual agent outcomes are
confounded with the entire action profile.

\newexample{Example 3(c) -- Poisson outcomes with interference}{
  Building upon Example~3(b), we now introduce a more realistic model of the viral marketing experiment, which we assume operates as  follows.  

As before, units are assigned to agent 1 or agent 2.
  We refer to the units assigned to agent $i$, i.e., 
the set $\{u \in \units : Z_u=i\}$, as the \emph{test set}
  of agent $i$. In addition, each agent is free to pick a \emph{seed set}; each seed set is in a separate population that is disjoint from the test sets. The seed set $i$ corresponds 
to treatment version --agent action-- $A_i$. 
The seed set will be targeted with a   promotional campaign, 
and outcomes will be measured on units only in the test sets, 
say, number of purchases for each unit.
 The rationale is that the experimenter is interested in the
  viral marketing efficacy of the agents, i.e., their ability to
  select influential seed sets.

  Under interference, the treatment version (seed set) selected by
  agent $i$ induces a rate $\lambda_i$ on units assigned to $i$, and a
  rate $\gamma \lamc{i}$, where $0 \le \gamma \le 1$, on units
  assigned the other agent. The parameter $\gamma$ models the amount
  of interference; if $\gamma=0$ there is no interference, whereas
  $\gamma=1$ indicates maximum interference.  For the rest of this
  paper we will consider $\gamma$ known to agents and the designer,
  but this is without loss of generality.
  Rate $\lamc{i}$ can be interpreted as the rate that agent $i$ would
  achieve if the units that are targeted were its own units.
  Parameter $\gamma$ represents a discount because the targeted units
  are in the test set of another agent.

  The setting with interference is depicted in Figure~\ref{fig:m1}.
  The labels on the edges correspond to the effects from the seed
  sets, including interference effects.  For example, the purchase
  rate in test set 2 (units assigned to agent 2) is equal to $\gamma
  \lamc{1} + \lam{2}$; the first term is the discounted influence from
  the seed set of agent 1, and the second term is the influence from
  the seed set of agent 2. Agents are scored based on outcomes of
  units in their respective test sets. Therefore, an agent can also
  ``free-ride'' on the conversion rate that comes from the action of
  the other agent.  \drawMone{1.2}{2.5} }

% Analytically
\newexample{Example 3(d) -- Poisson outcomes with interference}{
Given the interference model of Example 3(c), 
the actions are $A_1 = (\lam{1}, \lamc{1})$, $A_2 = (\lam{2}, \lamc{2})$, and the observed outcomes on the units in the test sets have the following distributions:
\begin{align}
\label{eq:poisson_interference}
\Yobs{u1} \sim \mathrm{Pois}(\lam{1}+ \gamma \lamc{2}), \nonumber \\
\Yobs{u2} \sim \mathrm{Pois}(\lam{2} + \gamma \lamc{1}).
\end{align}

To derive the performance of an agent, say agent 1, we need 
to replace agent 2 with a replicate of agent 1, playing action $A_2 =(\lam{1}, \lamc{1})$. In this case, the induced rate 
on the units assigned to agent 1 is actually equal to $\lam{1} + \lamc{1}$ since, by definition of our interference model in Example 3(c), a rate is discounted only from a seed set of one agent 
to the test set of another agent. 
Thus, the performance of agent $i$ for action 
$\alpha_i= (\lam{i}, \lamc{i})$ is equal to 
\begin{align}
\label{eq:performance:interference}
\chi(\alpha_i) = \ExCond{Y_u(\Z, \A)}{\A=\alpha_i\ones, Z_u=i}
 = \lam{i} + \lamc{i}.
\end{align}
}
\vspace{-0.2in}

It can be seen, by inspection of Eq. \eqref{eq:poisson_interference},
that the outcomes of one unit depend on the action of the other agent.
For example, the outcomes $\Yobs{.1}$ on units assigned to agent 1
depend on action $\lam{1}$ of agent 1 as well as action $\lamc{2}$ of
agent 2. Hence, the observed outcomes for one agent carries
statistical information for the action of the other agent. This
information should be used in order to correctly estimate the agent
qualities, and then the agent of highest quality.

However, the estimation of qualities is not possible through outcomes
\eqref{eq:poisson_interference}, because 
%the agent actions are
%linearly entangled in the outcomes distribution, and thus are not
%identifiable.  In other words, 
there exist multiple action profiles
for which the observed outcomes are equally likely.  It follows that
there is no identifying statistic, and our theory (e.g., Theorem
\ref{theorem:general}) cannot be applied.  Furthermore, the
variance-stabilization transformations that were shown to give more
powerful designs in Example~3(b) do not work. This is illustrated in
the following example.

\newexample{Example 3(e). -- Poisson outcomes with interference}{
  Consider the setup of Example 3(c) and an experiment $\design$ with
  the usual score function $\phi_i(\Yobs{..}) =\Yobsbar{.i}$.  
As the number of experimental units grows, 
Eq. \eqref{eq:poisson_interference} result in the 
following asymptotics.
\begin{align}
\sqrt{\UperA} \left(\Yobsbar{.1} - 
(\lam{1} + \gamma \lamc{2}) \right) 
\tod \mathcal{N}(0, \lam{1} + \gamma \lamc{2}),
\nonumber \\
\sqrt{\UperA} \left(\Yobsbar{.2} - 
(\lam{2} + \gamma \lamc{1}) \right) 
\tod \mathcal{N}(0, \lam{2} + \gamma \lamc{1}). \nonumber
\end{align}

Therefore, the probability that agent 1 wins is
\begin{align}
P_1(\A | \design) = \mathrm{Pr}(\Yobsbar{.1} > \Yobsbar{.2})
 =\Phi\left(\sqrt{\UperA} 
\frac{(\lam{1}-\gamma \lamc{1}) - (\lam{2}-\gamma \lamc{2})}
{\sqrt{\lam{1} + \gamma \lamc{1} + \lam{2} + \gamma \lamc{2}}}     \right).
\end{align}

This design is not incentive-compatible because agent 1 prefers 
a large $\lam{1} - \gamma \lamc{1}$ and a small $\lam{1} + \gamma
\lamc{1}$.  As can been seen from Figure \ref{fig:m1}, a purchase rate
of $\gamma \lamc{1}$ from the seed set of agent 1 only benefits agent
2. Thus, agent 1 wants to benefit its assigned units (test set 1)
while minimizing the spillovers to test set 2 that benefit only agent
2.  However, the experimenter wants to know
something very different. In particular, given the definition
of performance in Example~3(d), the experimenter 
wants to know the maximum $\lam{1} + \lamc{1}$ that 
agent $1$ can achieve (and maximum $\lam{2} + \lamc{2}$, for agent 2). This quantity is of interest 
because it is the quantity that agent 1 would maximize if a copy 
of agent 1 substituted agent 2, and also played $(\lam{1}, \lamc{1})$.

Using the variance-stabilizing transformation of Example~3(b), does not solve the problem.
In particular, if we use $\phi_i(\Yobs{..})= 2\sqrt{\Yobsbar{.i}}$ as
the score function, then the winning probability of agent 1 becomes
\begin{align}
P_1(\A| \design)= \Phi\left(\sqrt{\UperA/2}
(\sqrt{\lam{1}+\gamma \lamc{2}}- \sqrt{\lam{2}+\gamma \lamc{1}})\right). \nonumber
\end{align}

The incentive problem remains because agent 1 still wants achieve a
high purchase rate $\lam{1}$ on units in test set 1, and a low rate
$\lamc{1}$ in units of test set 2.  }

\subsection{Dealing with strategic interference through better designs}
\label{section:within_interference:sufficient}

We now describe a method to construct an incentive-compatible design in the viral marketing problem with interference.  The idea is to introduce a new design that will provide an identifying statistic, and
then define appropriate score functions to fulfill the conditions of
Theorem~\ref{theorem:general} that guarantee incentive-compatibility.

\newexample{Example 3(f). -- Poisson outcomes with interference -- New
  design}{
We consider the following new design.  
 The units are split
  in two groups, say $G_1$ and $G_2$. Within each group, units are
randomly assigned to the two agents, resulting in 2 test sets per agent.  For example, group
  $G_1$ has two test sets, namely $G_{11}$ with units assigned to agent 1, and
  $G_{12}$ with units assigned to agent 2.  Similarly, group $G_2$ has test sets $G_{21}$ with units assigned to agent 1, and $G_{22}$ with units assigned to agent 2.  Test sets in the same group may be overlapping. 
In addition, each agent is free to pick one \emph{seed set}; each seed set is in a separate population that is disjoint from the test sets. The seed set $i$ corresponds 
to treatment version --agent action-- $A_i$.
The outcomes $Y$, say  number of purchases for each unit, for each agent $i$, will be measured on units only in their two test sets,
namely $G_{1i}$ and $G_{2i}$. This design is depicted
  in Figure \ref{fig:m2}.  \drawMtwo{1.3}{2.4}

The outcomes model is similar to the design of Example 3(c) (see also Figure \ref{fig:m1}).  A seed set $i$ --action $A_i$-- induces a rate $\lam{i}$ on
units of group $G_i$, and a rate $\lamc{i}$ on units of the other
group. The rate is assumed to be discounted when the seed set is
targeting units in a test set of another agent. For example, units in
test set $G_{12}$ will have purchase rate $\lamc{2} + \gamma \lam{1}$;
the rate $\lamc{2}$ originates from seed set 2 affecting units in
group $G_1$, and rate $\lam{1}$ is from seed set 1 affecting units in
$G_1$, discounted by $\gamma$ because $G_{12}$ is a test set of agent
2. Thus, action $A_i$ is associated with a pair of rates, 
$A_i = (\lam{i}, \lamc{i})$.

Agent 1's action is $A_1= (\lam{1}, \lamc{1})$, and agent $2$'s action is $A_2 = (\lam{2}, \lamc{2})$. 
Therefore, the observed outcomes of units are distributed as follows:
\begin{align}
\label{eq:interference_outcomes_Mtwo}
\Yobs{ui} \sim 
\begin{cases} 
		\mathrm{Pois}(\lam{1} + \gamma \lamc{2}), &\mbox{if } u \in G_{11}, \\ 
		\mathrm{Pois}(\lamc{2} + \gamma \lam{1}), &\mbox{if } u \in G_{12}, \\
		\mathrm{Pois}(\lamc{1} + \gamma \lam{2}), &\mbox{if } u \in G_{21}, \\ 
		\mathrm{Pois}(\lam{2}  + \gamma \lamc{1}), &\mbox{if } u \in G_{22}.
\end{cases}
\end{align}
}

Using the same interference model (parameter $\gamma$ of discounted
influence) introduced in Example 3(c), the new design of Figure
\ref{fig:m2} now provides more information about the agent actions,
and thus their performance, through outcomes
\eqref{eq:interference_outcomes_Mtwo}.  This additional information
provides an identifying statistic that can be used to define score
functions that make the design of Figure~\ref{fig:m2}
incentive-compatible.

\newexample{Example 3(g). -- Poisson outcomes}{ 
By symmetry of the new design, the experimenter is interested to estimate $\chi(A_i) = \lam{i} +  \lamc{i}$.
Let $\bar{Y}_{ij}$ be the sample mean
  of outcomes of units in test set $G_{ij}$, and let $Y =
  (\bar{Y}_{11}, \bar{Y}_{12}, \bar{Y}_{21}, \bar{Y}_{22})^\intercal$.
  Define the matrices
 \[
B =  \left( \begin{array}{cccc}
1 & 1  & 0 & 0 \\
0 & 0  & 1 & 1 \\
\end{array} \right), \text{ and }
C =  \left( \begin{array}{cccc}
1 & 0  & \gamma & 0 \\
\gamma & 0  & 1 & 0 \\
0 & 1   & 0 & \gamma \\
0 & \gamma  & 0 & 1 \\
\end{array} \right). \]

\newcommand{\Da}{D_{\A}}

Denote the action profile as $\A = (\lam{1}, \lamc{1}, \lamc{2}, \lam{2})^\intercal$. 
Further, let $\Da = \mathrm{diag}(C \A)$  be the diagonal 
matrix with diagonal elements from the vector $C \A$.
By Eq. \eqref{eq:interference_outcomes_Mtwo}, as the number 
of units grows, we have 
\begin{align}
\sqrt{\maxUnit/4} (Y - C \A) \tod \mathcal{N} (0, \Da).
\end{align}
The term $m/4$ is because there are $m/4$ units per test set.
Now define the statistic $T = B C^{-1} Y$.
Since $\chiA = (\lam{1} +\lamc{1}, \lam{2}+\lamc{2})^\intercal = B \A$, it holds, asymptotically,\footnote{
The normality of $T$ follows from normality of $Y$. 
The expected value of $T$ is $\Ex{T} = \Ex{B C^{-1} Y} = \Ex{B C^{-1} C \A} = B \A$, and 
its variance is $\Var{T} = \Var{B C^{-1} Y} = B C^{-1} \Var{Y} (C^{-1})^\intercal B^\intercal
 = B C^{-1} (\Da/m) (C^{-1})^\intercal B^\intercal$.
}
\begin{align}
\sqrt{m/4} (T - \chiA) \tod \mathcal{N}(0, B C^{-1} \Da (C^{-1})^\intercal B^\intercal).
\end{align}
Therefore, the new design has identifiable performance, 
and $T$ is an identifying statistic, with covariance matrix 
$\Sa = B C^{-1} \Da (C^{-1})^\intercal B^\intercal$.

Now, using notation of Theorem \ref{theorem:general}, define the score function simply as 
\begin{align}
\label{eq1:score}
\phi_i(\Yobs{..}) = f(T_i) = T_i.
\end{align}
Thus, the Jacobian of $\phi$ is $\Jacob=\mathbb{I}$, the identity matrix. The matrix $V(\A)$ of Theorem \ref{theorem:general} is
calculated as
\begin{align}
V(\A) = \Jacob \Sigma(\A) \Jacob^\intercal = 
B C^{-1} \Da (C^{-1})^\intercal B^\intercal.
\end{align}
Through simple but tedious matrix algebra we obtain,
\begin{align}
V(\A) = \frac{1}{(1-\gamma^2)^2} 
 \left( \begin{array}{cc}
d_1 + \gamma^2 d_2 + d_3 + \gamma^2 d_4 &  -\gamma \sum_{i=1}^4 d_i \\
-\gamma \sum_{i=1}^4 d_i & \gamma^2 d_1 + d_2 + \gamma^2 d_3 + d_4 \\
\end{array} \right),
\end{align}
where $(d_i)$ are the diagonal elements of $\Da$;
thus, $d_1 = \lam{1} + \gamma \lamc{2}$, 
$d_2 = \gamma \lam{1} + \lamc{2}$, 
$d_3 = \lamc{1}  + \gamma \lam{2}$,
and $d_4 = \gamma \lamc{1} + \lam{2} $.
In particular, 
\begin{align}
\sum_{i=1}^4 d_i = (1+\gamma) \left[(\lam{1} + \lamc{1})
 + (\lam{2} + \lamc{2})\right].
\end{align}
It follows from Theorem Eq. \eqref{eq:theorem:cov} of Theorem \ref{theorem:general}, 
\begin{align}
v^{ij}_f(\alpha | \A_{-i}) = & (d_1 + \gamma^2 d_2 + d_3 + \gamma^2 d_4)+ (\gamma^2 d_1 + d_2 + \gamma^2 d_3 + d_4)   - (-2 \gamma \sum_{i=1}^4 d_i) \nonumber \\
= & (1+\gamma)^2 \sum_{i=1}^4 d_i = (1+\gamma)^3
 \left[(\lam{1} + \lamc{1})
 + (\lam{2} + \lamc{2})\right]
 \nonumber,
\end{align}
if $i \ne j$, and 0 otherwise.
It follows that,
\begin{align}
\label{eq:1}
\arg \max_{\alpha_i \in \actions_i}   \left\{ \frac{f(\chi(\alpha_i))}
{ v^{ij}_f(\alpha_i|\A_{-i})^{1/2} }\right\}  \propto
\arg \max_{\alpha_i \in \actions_i}   \left\{ 
\frac{\lam{i} + \lamc{i}}
{\sqrt{(\lam{1} + \lamc{1})
 + (\lam{2} + \lamc{2}) } }\right\}.
\end{align}
}

The expression on the right of Eq. \eqref{eq:1} is increasing with
respect to $\chi(\alpha_i) = \lam{i} + \lamc{i}$.  Therefore, each
agent prefers to play actions $(\lam{i}, \lamc{i})$ so as to maximize
their sum, $\lam{i} + \lamc{i}$, which is the quantity of interest to
the experimenter. Condition \eqref{eq:theorem:general} of 
Theorem \ref{theorem:general}
is fulfilled. Thus, incentives are aligned under the new
design. Intuitively, the new design allows all agents to benefit 
from spillovers. For example, in the previous design, 
agent 1 could not benefit from the spillover of seed set 1 to 
test set 2, because agent 1's score was calculated 
only on test set 1. However, in the new design, the score 
of agent 1 includes outcomes from units in the test set $G_{21}$, 
which receives spillovers from seed set 1.

\section{Conclusion}
We introduced game theory into experiments where the treatments are determined by actions of strategic agents, and where treatments can interfere with each other. 
% experiments
The goal of the experiment is to estimate the agent that is 
best with respect to a quantity of interest, defined 
in a context \emph{without} competition; e.g., average number 
of conversions from the agent's algorithm for viral marketing.
However, statistical estimation of the best agent
is based on experiment data, generated \emph{with} competition 
among agents. 
Thus, the game-theoretic setting poses new challenges to the statistical analysis of experiment data, and may often invalidate well-established experimental design methods.
%
% goal of IC design
The goal of incentive-compatible experimental design 
is to promote behaviors by agents that accord to the natural actions the agents would take in the experiment if there was no competition. 
%in a way that is desirable by the experiment designer. 
% Contributions: no interference

When agent actions do not interfere with each other, we showed that incentive-compatible designs are possible through variance-stabilizing transformations of statistics
that estimate how agent would perform without competition.
Furthermore, we proved a result suggesting that variance
stabilization might, more generally, lead to more powerful
incentive-compatible experiment designs, in which better agents have
higher chances of winning.
In the presence of interference, we showed that 
more elaborate designs are generally necessary to 
obtain statistics that estimate agent performances.
% examples
In the context of a viral marketing application, 
we showed how a better design can be constructed that can 
account for interference among agents, e.g., 
when agents are able 
to free-ride on the advertising campaign of 
other agents. 

% Bibliography
\small
\bibliographystyle{plain}
\bibliography{ic-exp-design-ec2015}
                             % Sample .bib file with references that match those in
                             % the 'Specifications Document (V1.5)' as well containing
                             % 'legacy' bibs and bibs with 'alternate codings'.
                             % Gerry Murray - March 2012

% History dates
% \received{February 2015}{March 2015}{June 2009}

% Electronic Appendix
% Appendix

\medskip
\section*{Appendix} 
\appendix 

\section{Extension to multiple blocks}
\label{section:multiple_blocks}
 
In this paper, our theory is developed and applied 
assuming only one block.
However, it is straightforward to extend it to multiple 
blocks in a typical blocking experiment design.
In this section, we give an outline of this extension.

 The treatment assignment rule $\psi$ now groups units into $B$ blocks
based on their covariates, and then randomizes treatment (i.e.,
the assignment of units to agents) 
within the
blocks; blocking is performed in a deterministic way based on the
publicly known covariates $\{X_u\}$, for each unit $\unit$.  Formally, rule
$\psi$
is a probability distribution over the space of pairs of binary
matrices $\Psi \defeq (\{0, 1\}^{\maxUnit \times \maxBlock}, \{0,
1\}^{\maxUnit \times \maxAgent})$.

% treatment assignment
A pair $(W, Z) \in \Psi$ is called a \emph{treatment assignment}, and
has the following interpretation.  The element $W_{\unit\block} = 1$
if unit \unit\ is assigned to block \block, and it is 0 otherwise.
Similarly, $Z_{\unit\agent}=1$ if unit $\unit$ is assigned to agent
$i$, and it is 0 otherwise.  Using dot-notation 
$W_{.\block}$ is the $\block$th column of matrix $W$, $W_{\unit.}$ is
the $\unit$th row of $W$ as a $\maxBlock \times 1$ vector, and $W_{..}
\equiv W$. Similarly for $Z$ and other matrices. 
Finally the notation $(W, Z) \sim \psi$ will denote a
treatment assignment $(W, Z) \in \Psi$, that is sampled according to
rule $\psi$. 

\newexample{Example A1}{Consider four experimental units (consumers)
  and two treatments (marketing agents) that an experimenter wishes to
  evaluate.  In particular, the experimenter is interested to estimate
  which agent  can achieve the highest number of sales.
  Suppose that, for each unit $\unit$, the experimenter and the agents
  know the marriage status (only covariate). We assume that units
$\{1,2\}$ are not married and $\{3,4\}$ are, and these correspond
to the two blocks $b\in\{1,2\}$.
The experimenter suspects that the outcomes 
  differ systematically based on marriage status,
and  randomizes treatment within blocks.
This design corresponds to treatment  assignment rule 
$\psi$ which samples with equal probability 1/4
  from the treatment assignments $\{W, Z\}$ where $Z \in \left \{
    \left( \begin{array}{cc}
        1 & 0 \\
        0 & 1\\
        1 & 0 \\
        0 & 1\end{array} \right), \left( \begin{array}{cc}
        0 & 1 \\
        1 & 0\\
        1 & 0 \\
        0 & 1\end{array} \right), \left( \begin{array}{cc}
        1 & 0 \\
        0 & 1\\
        0 & 1 \\
        1 & 0\end{array} \right), \left( \begin{array}{cc}
        0 & 1 \\
        1 & 0\\
        0 & 1 \\
        1 & 0\end{array} \right) \right \}$ and $W=\left(
    \begin{array}{cc}
      1 & 0 \\
      1 & 0 \\
      0 & 1 \\
      0 & 1\end{array} \right)$ is the matrix that indicates the
  blocking.
%; i.e., units 1,2 are assigned to the not-married block
%  ($\block=1$) and units 3,4 are assigned to the married block
%  ($\block=2$). 
Some examples of dot-notation follow: $W_{1.} = (1
  \doublespace 0)^\intercal$ is the assignment of unit $\unit$ over
  blocks, $W_{.2} = (0 \doublespace 0\doublespace 1 \doublespace
  1)^\intercal$ is the assignment over units in block 2, etc.}
%, and so on.  }

With multiple blocks, agents are allowed to play 
different actions across blocks. We would thus write 
$A_{ib}$ for the action of agent $i$ in block $b$, and 
$\actions_{ib}$ for the action space of this action.

With multiple blocks, there is also an additional block index 
for the potential and observed outcomes. 
For example, $\Yobs{ubi}$ is now the observed outcome 
of unit $u$ assigned to block $b$ and agent $i$;
with dot-notation, $\Yobs{.b.}$ denotes the observed outcomes 
of units in block $b$.
The experiment design $\design$ has now multiple score functions, $\phi_{b}$, one per block.
For example, $\phi_{ib}(\Yobs{.b.})$ is the score 
of agent $i$ in block $b$ with data $\Yobs{.b.}$.
Similar extensions are straightforward 
for the concepts of performance, natural action, and quality.

Given block-specific score functions, the winner 
of the experiment is the agent who won the majority of blocks,
ignoring ties. When there is no interference across and within-blocks,
then the experimenter can design an incentive-compatible
design within each block using Theorem \ref{theorem:general}.
In this case, each block would have a separate identifying 
statistic. When the action space of an agent 
is the product space of the block action spaces, 
the agent will prefer to maximize its winning probability 
within each block. Therefore, the incentive-compatibility 
results of Theorems \ref{theorem:general} and \ref{theorem:no_interference} can be readily applied.
The same results can be applied in the problem with interference, 
assuming that there is no between-block interference, 
i.e., an action of agent $i$ in block $b$ does not 
affect the outcomes for agent $j$ in some other block $b'$.

\section{Proofs}
%% All proofs go here.
%% General theorem.
\begin{customthm}{\ref{theorem:general}}
\label{theorem:general_proof}
Fix agent actions $\A$, and consider design $\design=\designTuple$ that has an 
identifying statistic $T$ with covariance matrix $\Sa$. 
Let $\phi_i(\Yobs{..}) =f(T_i)$ for some function $f : \Reals \to 
\Reals$, and let $v_{ij}(\A)$ be the $ij$th element 
of $V(\A)$ defined in Eq. \eqref{eq:Va}.
Also define,
\begin{align}
v^{ij}_f(\alpha|\A_{-i}) = v_{ii}(\alpha, \A_{-i}) + v_{jj}(\alpha, \A_{-i}) - 
v_{ij}(\alpha, \A_{-i}) - v_{ji}(\alpha, \A_{-i}). \nonumber
\end{align}
The design $\design$ is incentive-compatible, if, for every agent $i$,
\begin{align}
% \label{eq:theorem:general}
\arg \max_{\alpha_i \in \actions_i}   \left\{ \frac{f(\chi(\alpha_i))}
{ v^{ij}_f(\alpha_i|\A_{-i})^{1/2}} \right\}  & = \arg \max_{\alpha_i \in \actions_i}   \left\{ \chi(\alpha_i)\right\} \defeq A_i^\star, \nonumber
\end{align}
for every agent $j$, and all actions $\A_{-i}$. In such case, we say that $T$ is aligned with performance $\chi$ through score $\phi$.
\end{customthm}
\begin{proof}
For a vector $x \in \Reals^{\maxAgent}$, let $f(x) = (f(x_1), f(x_2), \ldots,
f(x_\maxAgent))^\intercal$. From the Delta theorem \cite{bickel2001mathematical, cox1998delta}, and the asymptotic property \eqref{eq:identifying} of the identifying statistic $T$, we obtain
\begin{align}
\label{thm3:eq1}
\sqrt{\UperA}\left(f(T) - f(\chiA)\right) \tod \mathcal{N}(0, \Jacob \Sa \Jacob^\intercal),
\end{align}
where $\Jacob$ is the Jacobian of $f$ at $\chiA$ (by definition, 
this is a diagonal matrix). The probability that 
agent $i$ wins over $j$ is equal to 
\begin{align}
\label{thm3:eq2}
\mathrm{Pr}\left(\phi_i(\Yobs{..}) > \phi_j(\Yobs{..})\right) =
\mathrm{Pr}\left(c^\intercal f(T) > 0\right),
\end{align}
where $c = (0, \ldots, 1, 0, \ldots, -1, 0, \ldots)^\intercal$, is a 
$\maxAgent \times 1$ vector, with zero elements, except for $c_i=1$ and $c_j=-1$. 
Using Eq. \eqref{thm3:eq1}, we have
\begin{align}
\label{thm3:eq3}
\sqrt{\UperA}\left(c^\intercal f(T) - c^\intercal f(\chiA)\right) \tod 
\mathcal{N}(0, c^\intercal\Jacob \Sa \Jacob^\intercal c). 
\end{align}
From \eqref{thm3:eq3}, probability \eqref{thm3:eq2} becomes
\begin{align}
\mathrm{Pr}\left(\phi_i(\Yobs{..}) > \phi_j(\Yobs{..})\right) =
\Phi\left( \frac{f_i(\chiA) - f_j(\chiA)}{v^{ij}_f(\A)^{1/2}}\right) = 
\Phi\left( \frac{\chi(A_i) - \chi(A_j)}{v^{ij}_f(\A)^{1/2}}\right), \nonumber
\end{align}
where $v^{ij}_f(\A)$ is given in Eq. \eqref{eq:theorem:cov}.
Therefore, agent $i$ 
maximizes its winning probability by playing the natural action, 
by property \eqref{eq:theorem:general}.
\end{proof}

%%  NO interference theorem.
\begin{customthm}{\ref{theorem:no_interference}}
Consider design $\design = \designTuple$ with an identifying statistic $T$ with covariance matrix $\Sa$.
Suppose Assumption \ref{assumption:no_interference} holds. If, for every agent $i$,
\begin{align}
% \label{eq:condition:simple}
 & \phi_i(\Yobs{..}) \equiv f(T_i), \text{ where } f:\Reals\to\Reals,  \nonumber \\ 
% \label{eq:condition:const}
 & \Var{\phi_i(\Yobs{..})}  = \mathrm{const.},\nonumber \\
% \text{ } \nonumber \\  % want to leave a blank line?
% \label{eq:condition:monotone}
 & \arg \max_{\alpha_i \in \actions_i} f(\chi(\alpha_i)) = \arg \max_{\alpha_i \in \actions_i} \{\chi(\alpha_i)\} \defeq A_i^\star,\nonumber
\end{align}
then design $\design$ is incentive-compatible.
\end{customthm}
\begin{proof}
By Assumption \ref{assumption:no_interference} (no interference), $\Sa$ is diagonal; let $\Sa = \mathrm{diag}(\sigma_{ii}^2(\A))$.
Then, from Theorem \eqref{theorem:no_interference} and 
Condition \eqref{eq:condition:simple}, 
\begin{align}
\Var{\phi_i(\Yobs{..})} = f'(\chi(A_i))^2 \sigma_{ii}^2(\A) = c, \nonumber
\end{align}
for some constant $c>0$. Also by Condition \eqref{eq:condition:simple}, the Jacobian of $\phi$ 
at $\A$, is given by
$\Jacob = \mathrm{diag}(f'(\chi(A_i)))$. 
Using the notation of Theorem \ref{theorem:general},
\begin{align}
V(\A) = \Jacob \Sigma(\A) \Jacob^\intercal = \mathrm{diag}(f'(\chi(A_i))^2 \sigma_{ii}^2(\A)) = c \mathbb{I}.\nonumber
\end{align}
It follows, $v_f^{ij}(\alpha|\A_{-i}) = 2c$ for any $i, j$, where $v_f^{ij}$ 
is defined in Eq. \eqref{eq:theorem:cov}, Theorem \ref{theorem:general}.
Using Condition \eqref{eq:condition:monotone},
\begin{align}
\arg \max_{\alpha_i \in \actions_i}   \left\{ \frac{f(\chi(\alpha_i))}
{ v^{ij}_f(\alpha_i|\A_{-i})^{1/2}} \right\} =
  (1/2c)  \arg \max_{\alpha_i \in \actions_i} \left\{\chi(\alpha_i) \right\}
= A_i^\star\nonumber.
\end{align}
Thus, all conditions of Theorem \ref{theorem:general} are
fulfilled, and the design $\design$ is incentive-compatible.
\end{proof}

\begin{customthm}{\ref{theorem:rao}}
Consider an incentive-compatible design $\design = \designTuple$, 
where action sets $\actions_i \subseteq \Reals$ are compact, 
and performance $\chi$ is one-to-one and continuous.
Let,
\begin{align}
% \label{eq:rao_proof}
\sqrt{\UperA} \left(\phi_i(\Yobs{..})-\chi(A_i)\right) \tod
\mathcal{N}(0, \sigma^2(A_i)), \nonumber
\end{align}
where function $\sigma^2 : \actions \to \Reals^{+}$ satisfies
\begin{align}
% \label{eq:powerful:var_proof}
\chi(\alpha_i') \ge \chi(\alpha_i) \Rightarrow \sigma^2(\alpha_i') \ge \sigma^2(\alpha_i), \nonumber
\end{align}
for every agent $i$, and all actions $\alpha_i', \alpha_i \in \actions_i$.
Consider a design $\design' = (\psi, \phi')$,
where $\phi'_i(\Yobs{..}) = \nu(\phi_i(\Yobs{..}))$, for each agent $i$, with
$\nu(\cdot)$ defined by
\begin{align}
% \label{eq:rao:nu_proof}
\nu(y) = \int^y \frac{1}{\sqrt{\sigma^2(\chi^{-1}(z)})} dz. \nonumber
\end{align}
Then, design $\design'$ is incentive-compatible and more powerful than $\design$, if $\nu(\cdot)$ is convex,
or $1/\sqrt{\sigma^2(\chi^{-1}(\cdot))}$ and 
$\sigma^2(\chi^{-1}(\cdot))$ are both convex.
\end{customthm}
\begin{proof}
From the univariate Delta theorem,
\begin{align}
\sqrt{\UperA}\left(\nu(\phi_i(\Yobs{..}) - \nu(\chi(A_i))\right)
\tod \mathcal{N}(0, 1), \nonumber
\end{align}
since $\nu'(\chi(A_i))^2 \sigma^2(A_i) = 1$, by Eq. \eqref{eq:rao:nu}. 
For brevity, set $\chi(A_i) \defeq \chi_i$ and $\sigma^2(A_i) \defeq \sigma_i^2$. Without loss of generality, assume $\chi_i \ge \chi_j $. 
% Assume, without loss of generality, that agent $i$ is the best agent. 
The probability that agent $i$ wins over agent $j$ 
in design $\design'$ is equal to,
\begin{align}
P_1(\A|\design') = \Phi\left(\sqrt{\UperA/2}( \nu(\chi_i) - \nu(\chi_j)) \right). \nonumber 
\end{align} 
In the old design, $\design$, this probability 
is equal to 
\begin{align}
P_1(\A|\design) = \Phi\left(\sqrt{\UperA}
\frac{\chi_i - \chi_j}{\sqrt{\sigma_i^2 + \sigma_j^2}} \right). \nonumber 
\end{align} 

\vspace{5pt} 
\noindent \emph{Case 1 -- Convex $\nu(\cdot)$.}  
By convexity of $\nu$ we have
\begin{align}
\label{thm4:eq1}
\frac{\nu(\chi_i) - \nu(\chi_j)}{\chi_i - \chi_j} \ge \nu'(\chi_j).
\end{align}
By definition \eqref{eq:rao},  $\nu'(\chi_j)^2 \sigma_j^2=1$.
By property \eqref{eq:powerful:var}, $\sigma_i^2 \ge \sigma_j^2$
since $\chi_i \ge \chi_j$. Hence, $\nu'(\chi_i)^2 \sigma_i^2 =1 
\Rightarrow \nu'(\chi_i)^2 \le \nu'(\chi_j)^2$. It follows, 
\begin{align}
\label{thm4:eq2}
\nu'(\chi_j)^2 \sigma_j^2 + \nu'(\chi_j)^2 \sigma_i^2 \ge 2 \nonumber
\Rightarrow \\
 \nu'(\chi_j) \ge \sqrt{\frac{2}{\sigma_i^2 + \sigma_j^2}}.
\end{align}
Combining \eqref{thm4:eq1} and \eqref{thm4:eq2}, 
we obtain
\begin{align}
 \frac{\nu(\chi_i) - \nu(\chi_j)}{\sqrt{2}} \ge \frac{\chi_i - \chi_j}{\sqrt{\sigma_i^2 + \sigma_j^2}}
\Rightarrow  \Phi\left(\sqrt{\UperA/2}( \nu(\chi_i) - \nu(\chi_j)) \right)
 \ge  \Phi\left(\sqrt{\UperA}
\frac{\chi_i - \chi_j}{\sqrt{\sigma_i^2 + \sigma_j^2}} \right), \nonumber
\end{align}
which implies that design $\design'$ is more powerful than $\design$.

\vspace{5pt} 
\noindent \emph{Case 2 -- $1/\sqrt{\sigma^2(\chi^{-1}(\cdot))}$ and 
$\sigma^2(\chi^{-1}(\cdot))$ are both convex.}  
It holds, 
\begin{align}
\frac{\nu(\chi_i) - \nu(\chi_j)}{\chi_i - \chi_j} = 
\frac{1}{\chi_i - \chi_j} \int_{\chi_j}^{\chi_i} \frac{1}{\sqrt{\sigma^2(\chi^{-1}(z))}} dz \ge  
\frac{1}{\sqrt{\sigma^2(\chi^{-1}((\chi_j + \chi_i)/2))}}
\nonumber \\ 
\ge \frac{1}{\sqrt{\sigma^2(\chi^{-1}(\chi_j))/2 
+\sigma^2(\chi^{-1}(\chi_i))/2}}\defeq \sqrt{\frac{2}{\sigma_i^2 + \sigma_j^2}} \nonumber.
\end{align}
The first inequalty is obtained by convexity of $1/\sqrt{\sigma^2(\chi^{-1}(\cdot))}$, 
and the second by convexity of $\sigma^2(\chi^{-1}(\cdot))$.
To finish the proof we follow the same arguments as in Case 1.
\end{proof}

%%%   DISCUSSION

\section{Remarks on variance stabilization}
\label{section:varstab}
In Theorem \ref{theorem:no_interference}, the 
variance of the score functions $\phi_i$ is stabilized (made constant)
through a transformation $f$. 
Such transformations that stabilize the variance of a statistic, 
 are known as \emph{variance-stabilizing}
transformations in statistics, and they are of fundamental importance in various tasks, such as hypothesis testing and estimation. For
example, consider a sample average of $n$ independent Poisson random variables with mean $\lambda$.
The asymptotic distribution of the sample average is 
$\bar{Y} \sim \mathrm{Poisson}(\lambda/n)$. 
In the limit, $\sqrt{n}(\bar{Y}- \lambda) \tod \m{N}(0, \lambda)$. 
This asymptotic result is not useful to 
construct a confidence interval for the unknown 
parameter $\lambda$ because the variance of $\bar{Y}$ depends on that unknown parameter.
However, through the Delta theorem, 
$2 \sqrt{n}(\sqrt{\bar{Y}} -\sqrt{\lambda}) \tod
\m{N}(0, 1)$ i.e., the variance of $\sqrt{\bar{Y}}$ is 
constant; the statistic $\sqrt{\bar{Y}}$ can be used to obtain \emph{exact} confidence
intervals for $\lambda$. 

In our setting, the variance stabilization helps to mitigate the
risk-return trade-off that strategic agents can undertake in an
experiment. Loosely speaking, when the variance is stabilized a worse
agent cannot benefit by being more risky, and a better agent cannot
benefit by being more conservative. Rather, incentives are aligned
such that every agent will do its best, assuming
the remaining conditions of Theorem \ref{theorem:no_interference} are fulfilled.

\section{Discussion}
\label{section:discussion}
Our approach to design incentive-compatible experiments
has been through the use of an identifying statistic, 
i.e., a statistic that can estimate the agent performances without competition. In many situations, such a statistic exists, e.g., by using
sample summaries (means, variances, etc), and then appealing to the
central limit theorem.  In most realistic cases, 
a key assumption will be that the outcomes have a known parametric form. In this paper, we made such parametric assumptions in 
our viral marketing example.

However, an experimenter might be unwilling to make such parametric
modeling assumptions. An alternative would then 
be either to use a nonparametric test for the quantities of interest (i.e., agent 
performances), or a randomization-based analysis. The former includes 
a wide-class of nonparametric methods, and we  plan to investigate
it in future work. It should be noted, however, that even nonparametric tests
have crucial underlying assumptions, e.g., exchangeability of observed data, that are not easy to validate. 
In many situations, such assumptions are more critical than, for example, 
normality assumptions that can be quite robust under many scenarios
\cite[Appendix 3A]{box1978statistics}. The latter method of 
randomization-based analysis usually starts from a null hypothesis 
which aims to provide evidence for the likelihood of certain observed quantities, e.g., through p-values. However, it is hard to 
test such hypotheses in our setting because agents can freely choose the versions 
of the treatment to apply. Therefore, one cannot use the null hypothesis 
to \emph{impute} counterfactuals, i.e., outcomes that would have been observed under a different randomization because agents 
act in a strategic, non-random way.

% identifability = parametric assumption
In the case with interference, the assumption 
that an identifying statistic exists has two components.
First, it is required that the experimenter has a good idea about 
the \emph{model} of interference, e.g., that an 
agent action affects the outcomes for another agent linearly, 
as in Example 3(c).
Assumptions on the model of interference are frequent in practice because they help to deal with interference after the experiment has been performed
\cite{besag1986statistical}.
Second, it is required that the experimenter knows exactly the hyperparameters 
of the assumed interference model. In the viral 
marketing problem of Section \ref{section:interference}, 
a scalar parameter $\gamma$ was used to model 
interference. In our examples, we assumed that $\gamma$ was known.
One way to avoid this problem is 
to treat such parameters of interference as \emph{nuisance} parameters, and then use a suitable statistical method; e.g., use profile likelihood 
instead of the true, but unknown, likelihood to obtain proxies for the 
maximum-likelihood estimates. A Bayesian approach would be to set priors for such parameters and then obtain a posterior predictive distribution for the unknown agent performances. 
Agents would then be scored according to this 
posterior distribution, but this would not alter the core 
of our methodology. 

\if0
\section{Justification for $\gamma <1$.}
\label{section:justification}
Here we provide a simple generative model that gives rise to $\gamma < 1$. 
For simplicity consider the setting where there is just one block. Units are randomly assigned into test sets 1 and 2. Each agent is then free to select a seed set from a separate population. Suppose that agent $i$ has $K$ possible seed sets: $S_{i, 1}, ..., S_{i,K}$. Let $\mu^1_{i,k}$ and $\mu^2_{i,k}$ denote the Poisson influence of seed set $S_{i,k}$ on test set 1 and 2, respectively. We assume that $\mu^j_{i,k} $ are i.i.d. samples from $N(\mu_i, \sigma_i^2)$. Intuitively, this means that all the seed sets are equivalent and the two test sets are equivalent, and the main difference in the influence comes from the agents having different average influences, $\mu_i$, and difference variances in their influence, $\sigma^2_i$. This can be a reasonable model if the seed sets and test sets are drawn uniformly from the units. Agent 1 observes $S_{1, k}$ and $\mu^1_{1,k}$, but does not observe $\mu^2_{1,k}$ since he does not see test set 2. Similarly, agent 2 observes $S_{2, k}$ and $\mu^2_{2,k}$, but not $\mu^1_{2,k}$. Agent 1 chooses the seed set $i$ as to maximize $\mu^1_{i,k}$. Since all the $\mu^1_{i,k}$ are independent, $\lambda_1 \equiv \max_i \mu^1_{i,k}$ is of order $\mu^{1}+2\sigma_1\sqrt{\log{K}}$; $\lambda_1$ is the effective Poisson influence from agent 1's chosen seed set to test set 1. The influence of agent 1's seed set on test set 2 is of order $\mu^1$. Therefore, the effective discount of agent 1 is $\gamma_1 \approx (mu^{1}+2\sigma_1\sqrt{\log{K}})/\mu^1 < 1$. Similarly, the effective discount of agent 2 is  $\gamma_2 \approx (mu^{2}+2\sigma_2)\sqrt{\log{K}}/\mu^2 < 1$. 
\fi

\end{document}